\newtheorem{theorem}{Theorem}
\newtheorem{property}{Property}
\newtheorem{remark}{Remark}
\newtheorem{assumption}{Assumption}
\begin{document}

\title{Event-Triggered Intermittent Prescribed Performance Control for Spacecraft Attitude Reorientation}

\author{Jiakun Lei,
	Tao Meng,
	Kun Wang,
	Weijia Wang,
	Zhonghe Jin
	\thanks{Jiakun Lei, Ph.D. School of Aeronautics and Astronautics, Zhejiang University, Hangzhou, China, 310027, leijiakun@zju.edu.cn}
	
	\thanks{Tao Meng, Prof., School of Aeronautics and Astronautics, Zhejiang University, Hangzhou, China, 310027; Zhejiang Key Laboratory of micro-nano-satellite, mengtao@zju.edu.cn}
	
	\thanks{Kun Wang, Ph.D., School of Aeronautics and Astronautics, Zhejiang University, Hangzhou, China, 310027, wang\_kun@zju.edu.cn}
	
    \thanks{Weijia Wang, Ph.D., School of Aeronautics and Astronautics, Zhejiang University, Hangzhou, China, 310027, weijiawang@zju.edu.cn}
	
	\thanks{Zhonghe Jin, Prof., School of Aeronautics and Astronautics, Zhejiang University, Hangzhou, China, 310027, Zhejiang Key Laboratory of micro-nano-satellite, jin\_zh@zju.edu.cn}
}

\maketitle

\begin{abstract} 
	This paper focuses on the issue of how to realize spacecraft attitude control with guaranteed performance while conspicuously reducing the actuator acting frequency simultaneously. The prescribed performance control (PPC) scheme is often employed for the control with guaranteed performance. However, conventional PPC controllers are designed from the perspective of continuous system, which contradicts the "discrete" control logic in actual spacecraft control system, and such a problem limited the application value of PPC scheme in actual applications. In order to significantly lower the actuator acting frequency while still maintaining the desired performance, a composite event-trigger mechanism is proposed for this issue, turning off the actuator and eliminating unnecessary control output under appropriate conditions. Further, the proposed composite event-trigger mechanism is combined with a prescribed performance control scheme, constructing a complete controller structure for solving the presented issue. Meanwhile, a special performance function is designed to provide mild transition process. Based on the proposed scheme, a specific backstepping controller is further developed as a validation. Finally, numerical simulation results are presented to validate the effectiveness of the proposed scheme.

\end{abstract}


\section{INTRODUCTION}
Recently, modern spacecraft control tasks usually appear with high-performance requirements, such as on-orbit docking and cooperative earth-observation imaging control. Such a requisition strongly motivated the development of the controller built with performance consideration, and the prescribed performance control (PPC) scheme has been of higher research interest recently. Nevertheless, traditional PPC controllers are presented in a continuous viewing, which is contradicts with the operation mode of actual actuators on the spacecraft control system, such as the propulsion system. Such a contradiction limited the application of the PPC scheme to the actual control scenarios, which motivates our idea of developing a prescribed performance control scheme with intermittent control behavior. This paper presents a complete solution for this concerning issue by organically employ the idea derived from intermittent control and combined it with prescribed performance control framework. The proposed composite event-trigger mechanism Meanwhile, the angular velocity norm constraint is also considered, fitting the actual engineering requirements.

Typical controllers are designed without considering performance factors in advance, which indicates that their performance can only be optimized by tedious parameter selection. To solve the problem that guarantees the achievement of the desired performance requirements, the prescribed performance control (PPC) scheme is proposed by Bechlioulis and Rovathakis in \cite{bechlioulis_adaptive_2009}. Since the PPC control scheme can be directly enforced to many existing frameworks, it has been employed in the attitude control problem in many works and showing its capability in handling the controller design with performance requirements, as stated in \cite{wei2018learning,hu2018adaptive,WANG2022,yong2020flexible}. In our previous work \cite{lei2022robust}, we present a PPC-based structure that can realize the desired-performance control while ensuring a low angular velocity during the attitude maneuver process simultaneously, which is realized by constructing a new kind of Barrier Lyapunov function (BLF).

Although traditional PPC controllers effectively handle the performance requirements, they require high-frequency control during the whole control procedure. Such an acting behavior is hard to bear for actual spacecraft actuators. Therefore, it strongly motivated our investigation for realizing the PPC with practically low-frequency control.

In order to lower the acting frequency of the actuator, controllers designed with intermittent acting behavior is considered in this paper. According to existing literature, 
controllers designed intermittently can be generally classified into three types, which treat intermittent control from the perspective of the switching system, the perspective of Lyapunov analysis, and the perspective of the event-trigger mechanism. The first one considers the system as a switched one, and the stability analysis is performed based on it, as stated in \cite{li2007stabilization,wang2016stabilization}. For the intermittent control scheme based on Lyapunov analysis, it considers the influence caused by the intermittent control behavior, quantitatively implements such an influence as an additional Lyapunov residual term, and the system stability is analyzed based on ensuring its practical boundedness. Such an idea can be found in \cite{gawthrop2009event,gawthrop2011intermittent,gawthrop2015intermittent,loram2011human}. These two control schemes are effective for linear system analysis but unsuitable for nonlinear system control. Since the spacecraft attitude system is strongly nonlinear with strong coupling characteristics, these two typical methods are hard to employ in the PPC controller design.

Unlike those previous works, the third kind of framework handles intermittent control in an event-triggered way, which is an inspirational work recently presented by Pio Ong \cite{ong2022stability}. Two different trigger mechanisms are introduced in the so-called intermittent event-trigger control scheme, responsible for the turn-on and the turn-off of the actuator. The author utilizes this framework to realize safety-critical control, introducing it to the control-barrier function-based structures.

Motivated by the idea lies in the intermittent-event trigger mechanism, this paper investigated the possibility of enforces the intermittent event-trigger control methodology and the prescribed performance control scheme together, aiming at realizing the desired performance with much lesser actuator activation.
In \cite{ong2022stability}, the intermittent event-trigger control scheme is introduced to a stabilization safety-critical problem. However, since spacecraft need accelerating and braking such that the spacecraft is of high-dynamic status during the control procedure, some specific designs should be reconsidered in advance.

In order to make the intermittent event trigger suitable for attitude reorientation control, a composite event-trigger mechanism is redesigned with an exponential function-based trigger mechanism and a performance evaluation function-based trigger mechanism.
Further, we enforce this proposed trigger mechanism into a prescribed performance control (PPC) scheme stated in our previous work \cite{lei2022robust}. Besides, a special performance function is proposed based on the hyperbolic-tangent function, providing mild actuator behavior during the attitude maneuver stage. In this paper, detailed suggestions for building the intermittent event-trigger attitude control are analyzed and presented, facilitating its application to various control scenarios. Based on these modifications and analysis, a complete controller structure is presented that is able to achieve the guaranteed performance with a much lesser acting frequency of the actuator, which is of potential application value in the real spacecraft control system.

The following paper is organized as follows: section [\ref{pre}] presents some fundamental introduction about notations and system modeling under event-trigger mechanisms. The main results of this paper are presented in section [\ref{results}], while the specific controller derivation is elaborated in section [\ref{controller}] thoroughly. Theoretical analysis of the system stability is presented in section [\ref{response}], and the numerical simulation results are illustrated and analyzed in section [\ref{sim}] accordingly.

\section{PRELIMINARIES}\label{pre}
\subsection{Notations}
In this subsection, we define the following symbols for convenient. $\|\cdot\|$ stands for the Euclidean norm of arbitrary vector, or the induced norm of arbitrary given matrix. For any given vector $\boldsymbol{a}$, $a_{i}$ represents the $i$ th component of $\boldsymbol{a}$. Correspondingly, $\boldsymbol{a}^{\times}$ denotes the $\mathbb{R}^{3\times 3}$ cross manipulation matrix for $\boldsymbol{a}$ such that $\boldsymbol{a} \times \boldsymbol{b} = \boldsymbol{a}^{\times}\boldsymbol{b}$ is always satisfied. For the spanned diagonal matrix $\left(a_{i}\right)_{d}$, it denotes the $\mathbb{R}^{3\times 3}$ diagonal matrix that is consist of $a_{1}, a_{2},...a_{i}$. Similarly, $\left(a_{i}\right)_{\text{v}}$ stands for the spanned column-vector of $a_{i}$ as $\left(a_{i}\right) = \left[a_{1},a_{2},...a_{i}\right]^{\text{T}}$. The maxima and the minima is represented as $\left(\cdot\right)_{\text{max}}$ and $\left(\cdot\right)_{\text{min}}$ in the following analysis.

\subsection{System Modeling}\label{secsys}
Considering the general rigid-body attitude error system modeled with unit quaternion, for the attitude reorientation problem such that the desired angular velocity $\boldsymbol{\omega}_{d} = \boldsymbol{0}$ holds, the angular velocity can be expressed as $\boldsymbol{\omega}_{e} = \boldsymbol{\omega}_{s} - \boldsymbol{0}$. Hence, the system can be modeled as follows \cite{walls_globally_2005}:
\begin{equation} 
	\centering
	\label{errorsystem}
	\begin{aligned}
		\dot{\boldsymbol{q}}_{ev} &= \boldsymbol{F}_{s}\boldsymbol{\omega}_{s}\quad
		\dot{q}_{e0} = -\frac{1}{2}\boldsymbol{q}^{\text{T}}_{ev}\boldsymbol{\omega_{s}} \\
		\boldsymbol{J}\dot{\boldsymbol{\omega}}_{s} &=  -\boldsymbol{\omega}^{\times}_{s}\boldsymbol{J}\boldsymbol{\omega}_{s} + \boldsymbol{u}_{\text{act}} + \boldsymbol{d}
	\end{aligned}
\end{equation}
where the attitude error quaternion is represented as $\boldsymbol{q}_{e} = \left[\boldsymbol{q}_{ev}\quad q_{e0}\right]^{\text{T}}\in\mathbb{R}^{4}$, and its $i$ th component is denoted as $q_{ei}$ correspondingly. $\boldsymbol{q}_{ev} = \left(q_{ei}\right)_{\text{v}}\left(i = 1,2,3\right)$ is a $\mathbb{R}^{3}$ column vector represents the vector part of the attitude error quaternion $\boldsymbol{q}_{e}$, and $q_{e0}$ stands for the scalar part of $\boldsymbol{q}_{e}$. $\boldsymbol{\omega}_{s}$ denotes the attitude angular velocity of the spacecraft with respect to the inertial frame expressed in the body-fixed frame. $\boldsymbol{J}$ represents the inertial tensor matrix expressed in the body-fixed frame, which is a $\mathbb{R}^{3\times 3}$ positive-definite matrix all along. $\boldsymbol{u}_{\text{act}}$ denotes the actual exerted control input, equivalent to $\boldsymbol{u}\left(t_{k}\right)$, where $t_{k}$ represents the triggering time instant. $\boldsymbol{d}$ stands for the lumped external disturbance, while $\boldsymbol{F}_{s}$ is the Jacobian matrix defined as $\boldsymbol{F}_{s} = \frac{1}{2}\left[q_{e0}\cdot\boldsymbol{I}_{3} + \boldsymbol{q}^{\times}_{ev}\right]$.

Further, we consider the system model under event-trigger mechanism.
The actuator output will only be updated at the trigger time instant for the system with an event-trigger mechanism. Accordingly, a deviation exists between the calculated control command and the actual exerted control effort. For arbitrary trigger time instant $t_{k}$, define an inline error state variable $\boldsymbol{e}_{u}$ as $\boldsymbol{e}_{u} = \boldsymbol{u}\left(t\right) - \boldsymbol{u}\left(t_{k}\right)$, the system dynamics equation in (\ref{errorsystem}) can be reformed as \cite{wu2018event}:
\begin{equation}\label{errdyna}
	\boldsymbol{J}\dot{\boldsymbol{\omega}}_{s} = -\boldsymbol{\omega}^{\times}_{s}\boldsymbol{J}\boldsymbol{\omega}_{s} + \boldsymbol{u}\left(t\right) - \boldsymbol{e}_{u}\left(t\right) + \boldsymbol{d}
\end{equation}
notably, $\boldsymbol{u}\left(t\right)$ denotes the calculated control command signal. The following analysis will be performed based on this model.

\subsection{Assumptions}
In this paper we make the following assumptions for the analysis of the controller.
\begin{assumption}
	The inertial matrix is assumed to be a known one in this paper, which is a positive-definite matrix all along. Since the inertial uncertainties is not the main focus of this paper, such an assumption is rational.
\end{assumption}
\begin{assumption}
The external disturbance is unknown, but is bounded by a known constant, which is rational for a common attitude control task. Accordingly, we assume that $\|\boldsymbol{d}\| \le D_{m}$.
\end{assumption}

\section{CONTROL SCHEME DESIGN}\label{results}
Inspired by the work stated in \cite{ong2022stability}, we design a composite event-trigger mechanism governed by two independent trigger mechanisms. These two trigger mechanisms take responsible for the turn-on and the turn-off of the actuator, cutting off the unnecessary controller output and conspicuously reducing the acting frequency of the actuator. Further, the designed composite trigger mechanism is introduced into a prescribed performance control (PPC) framework that we presented previously in \cite{lei2022robust}, providing guaranteed performance control under limited actuator acting. A newly-designed special performance function is introduced in order to provide desired characteristic.

This section is organized as follows: the designing of the PPC scheme is stated in subsection [\ref{PF}],[\ref{err}] and [\ref{BLF}], corresponding to the introduction of the designed performance function, error transformation, and the designed BLF function respectively. Further, the designed event-trigger mechanism is presented in section [\ref{trigoff}] and [\ref{trigon}], corresponding to the turn-off and the turn-on trigger mechanism.

\subsection{Performance Function Design}\label{PF}
As stated in \cite{lei2022singularity}, in the typical prescribed performance control (PPC) scheme, the state variable is expected to be restricted in a sector-like region enclosed by the so-called performance function, which is designed according to the performance requirement.
In this paper, we design the performance function as follows:
\begin{equation}
	\rho\left(t\right) = \frac{\rho_{0}+\rho_{\infty}}{2} - \left(\frac{\rho_{0}-\rho_{\infty}}{2}\right)\tanh\left(\frac{T-T_{s}}{f_{s}}\right)
\end{equation}
where $\rho_{0}$ and $\rho_{\infty}$ denotes the initial value and the terminal value of $\rho\left(t\right)$, respectively. The parameter constant $T_{s}$ and $f_{s}$ is related to how fast and when the performance function started decreasing.

The main purpose of designing such a performance function is to alleviate the burden of the controller. According to the existing literature, the exponential function is the mostly applied one for the performance function. However, it decreases rapidly at the beginning of the control, and its decreasing rate will become slower as it approaches the steady state. Such a characteristic is not suitable for the proposed scheme. Since $\boldsymbol{\omega}_{s}$ needs some time to track the virtual control law $\boldsymbol{\alpha}$, hence the output state trajectory may not response as desired at first. Therefore 
there should be some additional space in the constraint region for the state trajectory at the beginning.
On the other hand, since the angular velocity will be restricted to a relatively small value as we expected, hence the convergence of the state trajectory will be much slower than a typical exponentially-converged one. Accordingly, the constraint region should not converging too fast at the initial stage. 

Following these stated reasons, we design the stated performance function to provide enough space for the state trajectory, providing relatively mild attitude maneuver progress and controller output.

\subsection{Error Transformation Procedure}\label{err}
For arbitrary error state variable denoted as $\boldsymbol{e}\left(t\right) = \left[e_{1}\left(t\right)...e_{i}\left(t\right)\right]^{\text{T}}$, defining the corresponding performance function for $e_{i}\left(t\right)$ as $\rho_{i}\left(t\right)$, the desired state constraint can be expressed as follows:
\begin{equation}\label{cons}
	-\rho_{i}\left(t\right) < e_{i}\left(t\right) < \rho_{i}\left(t\right)
\end{equation}
That is, $e_{i}\left(t\right)$ will be constrained into a symmetric funnel-like region.
Further, defining a transformed variable as $\varepsilon_{i} = e_{i}/\rho_{i}$, the state constraint in equation (\ref{cons}) can be rewritten as $|\varepsilon_{i}\left(t\right)| < 1$. Notably, if  $\boldsymbol{\varepsilon}^{\text{T}}\boldsymbol{\varepsilon} < 1$ is held, the constraint (\ref{cons}) will be satisfied. Hence we consider the tight one instead of the original one; i.e., we consider the state constraint as $\|\boldsymbol{\varepsilon}\|^{2} < 1$.

Before going further, note we have the following property.
\begin{property}
	\label{P2}
	Considering arbitrary $k>0$, then must exists a big enough constant $c$ and a constant $k_{0}$ satisfies $k_{0}<k$ such that $k\tanh\left(cx\right) \ge x$ will be held on $x\in\left[0,+k_{0}\right)$.
\end{property}
According to this property, a inference can be presented: if a practical boundary for $x$ exists such that $|x|\le k_{0}$ can be satisfied. Then considering a corresponding constant $k$ near $k_{0}$ while holds $k > k_{0}$, $|k\tanh\left(cx\right)| \ge |x|$ will be hold on $x\in\left[-k_{0},k_{0}\right]$ if $c$ is big enough.

\subsection{Barrier Lyapunov Function (BLF) Design}\label{BLF} 
Based on our previous work \cite{lei2022robust}, the Barrier Lyapunov Function (BLF) is designed as follows:
\begin{equation}
	V_{\text{BLF}} = \frac{k_{1}}{2}F_{1}\ln\left[\cosh\left(\frac{\boldsymbol{\varepsilon}^{\text{T}}\boldsymbol{\varepsilon}}{F_{1}}\right)\right]
\end{equation}
where $k_{1}$, $F_{1}$ are positive design parameters that need indicating. Taking the partial-derivative of the presented BLF with respect to $\boldsymbol{\varepsilon}$, it can be obtained that $\frac{\partial V_{\text{BLF}}}{\partial \boldsymbol{\varepsilon}} = k_{1}\tanh\left[\|\boldsymbol{\varepsilon}\|^{2}/F_{1}\right]\boldsymbol{\varepsilon}^{\text{T}}$. This indicates that the gradient with respect to $\boldsymbol{\varepsilon}$ can be modified by explicitly changing $k_{1}$. Notably, considering a function expressed as $H\left(x\right) = \ln\left[\cosh\left(x\right)\right]$, one has the following property:
\begin{property}\label{P1}
	For the defined function $H\left(x\right)$ on $x\in\left[0,+\infty\right)$, we have $\frac{1}{2}x\tanh x \le H\left(x\right) \le x\tanh x$.
\end{property}

\subsection{Event-Trigger Mechanism Design}

\label{sectrigoff}
Firstly, we consider the following problem: if the controller remains on a sample-and-hold status, when should we \textbf{turn off} the controller?
Considering that the new information had just been updated to the actuator, the measurement error will start accumulating continuously from zero until the next triggering condition is satisfied. Therefore, it is a natural idea that the actuator should be turned off when the held output value can no longer ensure the strict convergence of the system. 

Based on such an idea and considering the real limitations in real engineering work, two factors will determine the next trigger time: the following triggering condition and the maximum output time limitation. Accordingly, we design the turn-off trigger mechanism as follows:
\begin{equation}
	\label{trigoff}
	\begin{aligned}
		      t^{\text{act}}_{k} &= \inf_{t>t^{\text{on}}_{k}}\left\{\|\boldsymbol{e}_{u}\|^{2} =s e^{-\beta t} + m\right\}\\
		      t^{\text{pas}}_{k} &= t^{\text{on}}_{k} + T_{\text{max}}\\
		      t^{\text{off}}_{k} &= \min\left(t^{\text{act}}_{k},t^{\text{pas}}_{k}\right)
	\end{aligned}
  \end{equation}
where $t^{\text{on}}_{k} > 0$ stands for the $k$ th trigger time instant of the turn-on mechanism, $t^{\text{off}}_{k}$ represents the $k$ th trigger time that turns off the actuator. $s>0$, $\beta>0$ and $m>0$ are design parameters need indicating. $T_{\text{max}}$ represents the allowed maximum time duration for continuously actuator output. $\boldsymbol{e}_{u}\in\mathbb{R}^{3}$ denotes the accumulated error defined as $\boldsymbol{e}_{u} = \boldsymbol{u}\left(t\right) - \boldsymbol{u}\left(t^{\text{on}}_{k}\right)$, as stated in subsection [\ref{secsys}]. Detailed proof of the turn-off trigger mechanism will be discussed in section [\ref{response}] later.

\label{sectrigon}
Subsequently, we discuss when to \textbf{turn on} the actuator. When the controller had been turned off at the time instant $t = t^{\text{off}}_{k}$, due to the lack of control, the system may diverge from the desired state. However, since the previous trigger condition (\ref{trigoff}) guarantees that the system will be strictly converged (proved in theorem [\ref{T1}] later), there may exist an additional margin that allows the system to diverge temporarily, hence it may be unnecessary to re-control the system immediately.

 Motivated by the prescribed performance control (PPC) scheme and the work stated in \cite{velasco2009lyapunov}, we use a performance-evaluation function to qualitatively describe this process: although the strict convergence of the Lyapunov trajectory may not be ensured at every moment without control. However, by restricting it remains beneath a function that is monotonically converged to the origin (or a small region near the origin), the Lyapunov trajectory will be able to guide to the steady state, and the stability of the system can be practically obtained.

Based on such an idea, defining a performance-evaluation function as $S\left(t\right)$. For arbitrary candidate Lyapunov function as $V\left(t\right)$, if the inequality $S\left(t\right) > V\left(t\right)$ can be satisfied, then the system will be asymptotically-stabled if $S\left(t\right)$ is an asymptotically converged one. Note that exponential convergence is a common characteristic of the system stability, hence it is natural to design the performance-evaluation function as follows:
\begin{equation}
	S\left(t\right) = \left(S_{0} - S_{\infty}\right)e^{-kt} + S_{\infty}
\end{equation}
where $S_{0}$ and $S_{\infty}$ stands for the initial value and the terminal value of $S\left(t\right)$, that is, $S\left(t\right)$ will converge from $S_{0}$, and finally achieve a small region near $S_{\infty}$. $k$ stands for the convergence rate of $S\left(t\right)$. Accordingly, we design the turn-on trigger mechanism as follows:

\begin{equation}
	\label{trigon}
	t^{\text{on}}_{k+1} = \inf_{t>t^{\text{off}}_{k}}\left\{S\left(t\right)-V\left(t\right) =\delta_{m}\right\}
\end{equation}
where $\delta_{m} > 0$ is an additional robust buffer boundary need indicating. It can be observed that the trigger mechanism (\ref{trigon}) guarantees that $S\left(t\right) > V\left(t\right)$ will be satisfied.

\section{CONTROLLER DERIVATION}\label{controller}
 This subsection presents a complete controller structure by utilizing the backstepping designing methodology, the presented intermittent event-trigger mechansism and the PPC control scheme. The outer loop of the controller is designed to be triggered by the aforementioned event-trigger mechanism stated in section [\ref{sectrigoff}] and [\ref{sectrigon}]. A sketch map of the system controller diagram is illustrated in the figure [\ref{fig_sys}] as below.
 
\begin{figure*}
	\centering 
	\includegraphics[scale = 0.6]{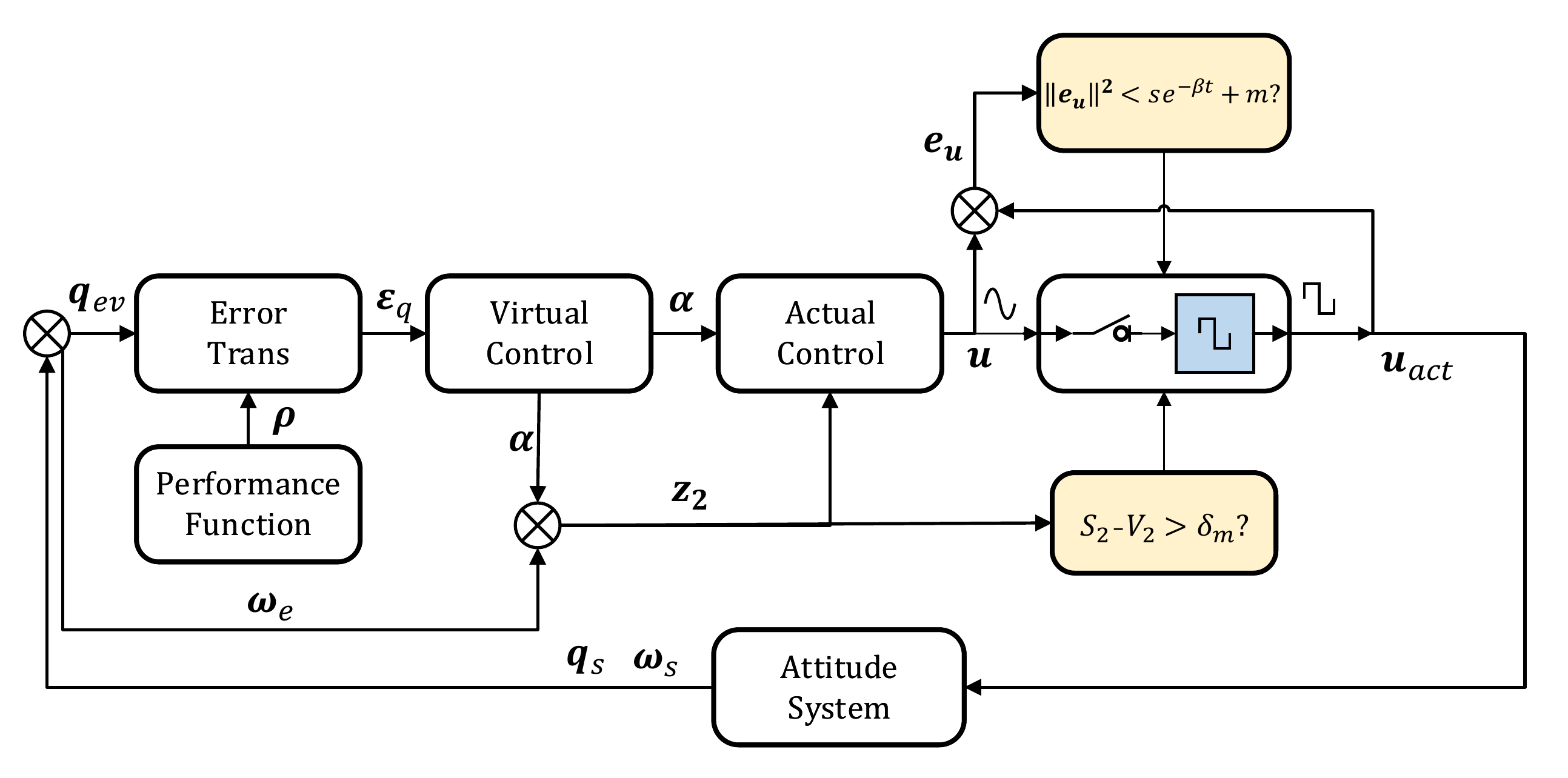}
	\caption{Diagram of the Controller}    
	\label{fig_sys}  
\end{figure*}

Define the translated variable of $\boldsymbol{q}_{ev}$ as $\boldsymbol{\varepsilon}_{q}$, the $i$ th component of $\boldsymbol{q}_{ev}$ is denoted as $q_{evi}$. The corresponding performance function of $q_{evi}$ is represented as $\rho_{i}$. According to the aforementioned definition in subsection \ref{err}, the $i$ th component of $\boldsymbol{\varepsilon}_{q}$ can be expressed as $\varepsilon_{qi} = q_{evi}/\rho_{i}$. Taking the time-derivative of $\boldsymbol{\varepsilon}_{q}$, one can be obtained that:
\begin{equation}\label{depsilon}
	\dot{\boldsymbol{\varepsilon}}_{q} = \boldsymbol{\psi}_{q}\left(\dot{\boldsymbol{q}}_{ev} - \boldsymbol{\eta}_{q}\boldsymbol{q}_{ev}\right)
\end{equation}
where $\boldsymbol{\psi}_{q} = \left(1/\rho_{i}\right)_{\text{d}}$ and $\boldsymbol{\eta}_{q} = \left(\dot{\rho}_{i}/\rho_{i}\right)_{\text{d}}$ are $\mathbb{R}^{3\times 3}$ diagonal matrices that is defined accordingly.
Considering the relationship in attitude kinematics equation (\ref{errorsystem}), we have $\dot{\boldsymbol{q}}_{ev} = \boldsymbol{F}_{s}\boldsymbol{\omega}_{s}$. Substituting this relationship into equation (\ref{depsilon}) can further obtain that: $\dot{\boldsymbol{\varepsilon}}_{q} = \boldsymbol{\psi}_{q}\left(\boldsymbol{F}_{s}\boldsymbol{\omega}_{s} - \boldsymbol{\eta}_{q}\boldsymbol{q}_{ev}\right)$.

Firstly, defining the first-layer error subsystem as $\boldsymbol{z}_{1} = \boldsymbol{\varepsilon}_{q}$. Viewing the angular velocity as the virtual control law, denoted as $\boldsymbol{\alpha}$, we design the virtual control law $\boldsymbol{\alpha}$ as follows:
\begin{equation}
	\label{virtual}
	\boldsymbol{\alpha} = -\frac{|q_{e0}|}{2}k_{m}M_{\omega}\boldsymbol{F}^{-1}_{s}\boldsymbol{\psi}_{q}^{-1}\left(\tanh\gamma \varepsilon_{qi}\right)_{\text{v}}
\end{equation}
where $k_{m} > 0$ and $\gamma > 0$ are parameter constants to be designed, $M_{\omega}$ is a parameter related to the upper boundary of $\|\boldsymbol{\alpha}\|$.
\begin{remark}\label{boundedv}
	Taking the norm of  $\boldsymbol{\alpha}$, it can be observed that $ \boldsymbol{\alpha} \le \frac{|q_{e0}|}{2}k_{m}M_{\omega}\|\boldsymbol{F}^{-1}_{s}\boldsymbol{\psi}^{-1}_{q}\|\le\frac{|q_{e0}|}{2}k_{m}M_{\omega}\|\boldsymbol{F}^{-1}_{s}\|\|\boldsymbol{\psi}^{-1}_{q}\|$. Notably, since $\|\boldsymbol{F}^{-1}_{s}\| = \frac{2}{|q_{e0}|}$, hence we can yield $\| \boldsymbol{\alpha}\| \le k_{m}M_{\omega}\left(|\rho_{i}|\right)_{\text{max}}$. This indicates that the desired angular velocity for the spacecraft is strongly bounded by $k_{m}M_{\omega}$, which can be designed straightforwardly. 
	
	On the other hand, if the actual angular velocity $\boldsymbol{\omega}_{s}$ could follow the virtual control law $\boldsymbol{\alpha}$ tightly, the angular rate of the spacecraft will be globally limited to the desired value, and the given constraint will be satisfied. For the following analysis, we define the practical boundary for $\|\boldsymbol{\omega}_{s}\|$ as $B_{\omega}$.
\end{remark}
\begin{remark}\label{boundedAlpha}
	We discuss the boundedness of $\dot{\boldsymbol{\alpha}}$ and $\ddot{\boldsymbol{\alpha}}$. In view of the physical meaning of $\boldsymbol{\alpha}$, it represents an angular velocity that is able to lead the system converge rapidly. Further, as stated in remark \ref{boundedv}, $\|\boldsymbol{\alpha}\|$ is strict bounded by constant. Therefore, since $\dot{\boldsymbol{\alpha}}$ stands for such an virtual angular acceleration, it is a natural inference that $\|\dot{\boldsymbol{\alpha}}\|$ will be bounded by a constant, which will be denoted as $B_{\alpha}$ in the following analysis. Thing will be similar for $\|\ddot{\boldsymbol{\alpha}}\|$ such that exist a constant satisfies $\|\ddot{\boldsymbol{\alpha}}\|\le B_{2\alpha}$.
\end{remark}

In view of the typical backstepping methodology, we define another error subsystem as $\boldsymbol{z}_{2} = \boldsymbol{\omega}_{s} - \boldsymbol{\alpha}$. Regarding to the dynamics equation (\ref{errdyna}) and combining with the inline error variable $\boldsymbol{e}_{u}$, one can be yielded that:
\begin{equation}\label{systemtrans}
	\boldsymbol{J}\dot{\boldsymbol{z}}_{2} = -\boldsymbol{\omega}^{\times}_{s}\boldsymbol{J}\boldsymbol{\omega}_{s} + \boldsymbol{u}\left(t\right) - \boldsymbol{e}_{u} + \boldsymbol{d} - \boldsymbol{J}\dot{\boldsymbol{\alpha}}
\end{equation}
$t$ will be omitted for brevity. 
To stabilize the $\boldsymbol{z}_{2}$ error subsystem, we design the following actual control law $\boldsymbol{u}$, stated as below:
\begin{equation}
	\label{controllaw}
	\boldsymbol{u} = \boldsymbol{\omega}^{\times}_{s}\boldsymbol{J}\boldsymbol{\omega}_{s} - K_{2}\boldsymbol{z}_{2} + \boldsymbol{J}\dot{\boldsymbol{\alpha}} - \hat{\boldsymbol{d}} - \boldsymbol{P}_{q}
\end{equation}
 $\hat{\boldsymbol{d}}\in\mathbb{R}^{3}$ is a rough compensation for external disturbance expressed as $\hat{\boldsymbol{d}} = D_{m}\left(\tanh\left(z_{2i}/p_{i}\right)\right)_{\text{v}}$, where $p_{i} > 0 \left(i = 1,2,3\right)$ are design parameters that need indicating. $\boldsymbol{P}_{q} = k_{1}\tanh\left(\boldsymbol{\varepsilon}^{\text{T}}_{q}\boldsymbol{\varepsilon}_{q}/F_{1}\right)\boldsymbol{\psi}_{q}\boldsymbol{F}_{s}\boldsymbol{\varepsilon}_{q}\in\mathbb{R}^{3}$ is a dynamical compensation term for the first $\boldsymbol{z}_{1}$ layer.
 \begin{remark}\label{boundedeps}
 	In the following analysis, we regard the translated variable to be a practical-bounded one such that $|\varepsilon_{qi}| \le k_{u}$ can be satisfied. On the one hand, since $q_{evi}$ trajectory will be enclosed by the performance constraint region at the steady state, $|\varepsilon_{qi}| < 1$ will be satisfied. On the other hand, although there exists possibility that $q_{evi}$ trajectory going out of the constraint region temporarily during the convergence stage. However, due to the designed special performance function and the fact that $\varepsilon_{i}$ is defined in a proportional form, the deviation caused by $q_{evi}$ will not have much influence on the translated variable during the convergence stage. In this way, it is rational to assume that there exists a practical boundary for each $\varepsilon_{qi} \left(i = 1,2,3\right)$.
 \end{remark}

 \section{STABILITY ANALYSIS}\label{response}
 Based on the aforementioned section \ref{BLF}, we choose a candidate Lyapunov function $V_{1}$ as follows:
 \begin{equation}
 	    V_{1} = \frac{k_{1}}{2}F_{1}\ln\left[\cosh\left(\boldsymbol{\varepsilon}^{\text{T}}_{q}\boldsymbol{\varepsilon}_{q}/F_{1}\right)\right]
 \end{equation}
 Taking the time-derivative of $V_{1}$, one can be obtained that:
 \begin{equation}
 	   \begin{aligned}
 	   	    \dot{V}_{1} = k_{1}\tanh\left(\boldsymbol{\varepsilon}^{\text{T}}_{q}\boldsymbol{\varepsilon}_{q}/F_{1}\right)\boldsymbol{\varepsilon}^{\text{T}}_{q}\dot{\boldsymbol{\varepsilon}}_{q}
 	   \end{aligned}
 \end{equation}
considering the expression of the time-derivative of $\boldsymbol{\varepsilon}_{q}$, we can yield the following result:
\begin{equation}
	\begin{aligned}
		\dot{V}_{1} &=  k_{1}\tanh\left(\boldsymbol{\varepsilon}^{\text{T}}_{q}\boldsymbol{\varepsilon}_{q}/F_{1}\right)\boldsymbol{\varepsilon}^{\text{T}}_{q}\left[\boldsymbol{\psi}_{q}\boldsymbol{F}_{s}\boldsymbol{\omega}_{s} -\boldsymbol{\psi}_{q}\boldsymbol{\eta}_{q}\boldsymbol{q}_{ev}\right]	
	\end{aligned}
\end{equation}
Further, substituting the virtual control law (\ref{virtual}) into $\dot{V}_{1}$, it can be obtained that:
\begin{equation}
	\label{dotV1}
	\begin{aligned}
		\dot{V}_{1} &= 
		-\frac{|q_{e0}|k_{m}M_{\omega}}{2}k_{1}\tanh\left(\boldsymbol{\varepsilon}^{\text{T}}_{q}\boldsymbol{\varepsilon}_{q}/F_{1}\right)\boldsymbol{\varepsilon}^{\text{T}}_{q}\left(\tanh\gamma\varepsilon_{qi}\right)_{\text{v}}\\
		&\quad+k_{1}\tanh\left(\boldsymbol{\varepsilon}^{\text{T}}_{q}\boldsymbol{\varepsilon}_{q}/F_{1}\right)\boldsymbol{\varepsilon}^{\text{T}}_{q}\boldsymbol{\psi}_{q}\boldsymbol{F}_{s}\boldsymbol{z}_{2}\\
		&\quad -k_{1}\tanh\left(\boldsymbol{\varepsilon}^{\text{T}}_{q}\boldsymbol{\varepsilon}_{q}/F_{1}\right)\boldsymbol{\varepsilon}^{\text{T}}_{q}\boldsymbol{\psi}_{q}\boldsymbol{\eta}_{q}\boldsymbol{q}_{ev}	
	\end{aligned}
\end{equation}
Notably, considering the part expressed as $\boldsymbol{\varepsilon}^{\text{T}}_{q}\left(\tanh\gamma\varepsilon_{qi}\right)_{\text{v}}$, owing to the fact depicted in property \ref{P2}, one can be obtained that $k_{m}\left(\tanh\gamma\varepsilon_{qi}\right) \ge \varepsilon_{qi}$ will be hold for $\varepsilon_{qi}\in\left(0,k_{u}\right)$, where $k_{u}$ is a practical bound for $\varepsilon_{qi}$ mentioned in remark [\ref{boundedeps}]. Accordingly, we have $k_{m}\boldsymbol{\varepsilon}_{q}^{\text{T}}\left(\tanh\gamma\varepsilon_{qi}\right)_{\text{v}} \ge \boldsymbol{\varepsilon}^{\text{T}}_{q}\boldsymbol{\varepsilon}_{q}$.

Combining this result with equation (\ref{dotV1}), it can be further obtained that:
\begin{equation}
	\begin{aligned}
		&-\frac{|q_{e0}|k_{m}M_{\omega}}{2}k_{1}\tanh\left(\boldsymbol{\varepsilon}_{q}^{\text{T}}\boldsymbol{\varepsilon}_{q}/F_{1}\right)\boldsymbol{\varepsilon}^{\text{T}}_{q}\left(\tanh\gamma\varepsilon_{qi}\right)_{\text{v}}\\
		\le& 			-\frac{|q_{e0}|M_{\omega}}{2}k_{1}\tanh\left(\boldsymbol{\varepsilon}_{q}^{\text{T}}\boldsymbol{\varepsilon}_{q}/F_{1}\right)\boldsymbol{\varepsilon}^{\text{T}}_{q}\boldsymbol{\varepsilon}_{q}
	\end{aligned}
\end{equation}
Further, according to the conclusion in property (\ref{P1}), we have $\tanh\left(\boldsymbol{\varepsilon}^{\text{T}}_{q}\boldsymbol{\varepsilon}_{q}/F_{1}\right)\boldsymbol{\varepsilon}_{q}^{\text{T}}\boldsymbol{\varepsilon}_{q} \le F_{1}\ln\left[\cosh\left(\boldsymbol{\varepsilon}^{\text{T}}_{q}\boldsymbol{\varepsilon}_{q}/F_{1}\right)\right]$.
 Accordingly, this inequality can be further relaxed to the following form:
\begin{equation}\begin{aligned}
	&\quad-\frac{|q_{e0}|M_{\omega}}{2}k_{1}\tanh\left(\boldsymbol{\varepsilon}^{\text{T}}_{q}\boldsymbol{\varepsilon}_{q}/F_{1}\right)\boldsymbol{\varepsilon}^{\text{T}}_{q}\boldsymbol{\varepsilon}_{q}\\
	&\le   -|q_{e0}|M_{\omega}\frac{k_{1}}{2}F_{1}\ln\left[\cosh\left(\frac{\boldsymbol{\varepsilon}_{q}^{\text{T}}\boldsymbol{\varepsilon}_{q}}{F_{1}}\right)\right]\\
	&=  -|q_{e0}|M_{\omega}V_{1}
\end{aligned}
\end{equation}
Subsequently, considering the third term in expression (\ref{dotV1}), note the following relationship exists: $\boldsymbol{\varepsilon}_{q} = \boldsymbol{\psi}_{q}\boldsymbol{q}_{ev}$.
 Hence, it can be further rewritten into the following form as  $-k_{1}\tanh\left(\boldsymbol{\varepsilon}^{\text{T}}_{q}\boldsymbol{\varepsilon}_{q}/F_{1}\right)\boldsymbol{\varepsilon}_{q}^{\text{T}}\boldsymbol{\eta}_{q}\boldsymbol{\varepsilon}_{q}$. According to the definition, $\boldsymbol{\eta}_{q} = \left(\dot{\rho}_{i}/\rho_{i}\right)_{\text{d}}$ will be hold. Therefore, it can be inferred that:
 \begin{equation}
 	\begin{aligned}
 		 	     \quad&-k_{1}\tanh\left(\boldsymbol{\varepsilon}^{\text{T}}_{q}\boldsymbol{\varepsilon}_{q}/F_{1}\right)\boldsymbol{\varepsilon}^{\text{T}}_{q}\boldsymbol{\eta}_{q}\boldsymbol{\varepsilon}_{q} \\
 		 	     \le& \left(|\dot{\rho}_{i}|/\rho_{i}\right)_{\text{max}}k_{1}\tanh\left(\boldsymbol{\varepsilon}^{\text{T}}_{q}\boldsymbol{\varepsilon}_{q}/F_{1}\right)\boldsymbol{\varepsilon}^{\text{T}}_{q}\boldsymbol{\varepsilon}_{q}
 	\end{aligned}
 \end{equation}
Notably, owing to the conclusion in property (\ref{P1}), we have:
\begin{equation}
	\frac{k_{1}}{4}\tanh\left(\boldsymbol{\varepsilon}^{\text{T}}_{q}\boldsymbol{\varepsilon}_{q}/F_{1}\right)\boldsymbol{\varepsilon}^{\text{T}}_{q}\boldsymbol{\varepsilon}_{q}\le \frac{k_{1}}{2}F_{1}\ln\left[\cosh\left(\boldsymbol{\varepsilon}^{\text{T}}_{q}\boldsymbol{\varepsilon}_{q}/F_{1}\right)\right]
\end{equation}
 Sorting out these results, we can yield that:
 \begin{equation}\label{dV1}
 	\begin{aligned}
 		\dot{V}_{1} &\le -\left[|q_{e0}|M_{\omega}-4\left(\frac{|\dot{\rho}_{i}|}{\rho_{i}}\right)_{\text{max}}\right]V_{1} + \boldsymbol{P}^{\text{T}}_{q}\boldsymbol{z}_{2}
 	\end{aligned}
 \end{equation}
where  $\boldsymbol{P}_{q} = k_{1}\tanh\left(\boldsymbol{\varepsilon}^{\text{T}}_{q}\boldsymbol{\varepsilon}_{q}/F_{1}\right)\boldsymbol{\psi}_{q}\boldsymbol{F}_{s}\boldsymbol{\varepsilon}_{q}$ is defined accordingly.
Notably, $|q_{e0}|M_{\omega}-4\left(|\dot{\rho}_{i}|/\rho_{i}\right)_{\text{max}} > 0$ should be ensured for the system's stability. This parameter selecting principle will be later discussed in section \ref{para}. 
Subsequently, we consider the range of $\|\boldsymbol{P}_{q}\|$, taking its norm we can yield:
\begin{equation}\begin{aligned}\label{boundq}
			\|\boldsymbol{P}_{q}\| &= \|k_{1}\tanh\left(\frac{\boldsymbol{\varepsilon}^{\text{T}}_{q}\boldsymbol{\varepsilon}_{q}}{F_{1}}\right)\boldsymbol{\psi}_{q}\boldsymbol{F}_{s}\boldsymbol{\varepsilon}_{q}\|\\
			&\le k_{1}\|\boldsymbol{\varepsilon}_{q}\|\|\boldsymbol{\psi}_{q}\|\|\boldsymbol{F}_{s}\|
	\end{aligned}
\end{equation}
According to the conclusion analyzed in remark \ref{boundedeps}, each $\varepsilon_{i}$ can be regarded practically bounded by $k_{u}$, hence $\|\boldsymbol{\varepsilon}_{q}\| \le \sqrt{3}k_{u}$ will be satisfied. Therefore, $\|\boldsymbol{P}_{q}\|$ will be practically bounded such that $\|\boldsymbol{P}_{q}\| \le \sqrt{3}k_{u}k_{1}\left(1/\left(\rho_{i}\right)_{\text{min}}\right)$ will be hold all along. We further define the expressed term  $\sqrt{3}k_{1}k_{u}\left(1/\left(\rho_{i}\right)_{\text{min}}\right)$ as $B_{q}$ for convenient.

	Further, we consider the second-layer error subsystem $\boldsymbol{z}_{2}$ modeled with sample measurement error $\boldsymbol{e}_{u}$.
Substituting the actual control law (\ref{controllaw}) into equation (\ref{systemtrans}), one has:
\begin{equation}\label{subJz}
	\boldsymbol{J}\dot{\boldsymbol{z}}_{2} = -K_{2}\boldsymbol{z}_{2} + \tilde{\boldsymbol{d}} - \boldsymbol{e}_{u}-P_{q}
\end{equation}
Choose the Lyapunov function for $\boldsymbol{z}_{2}$ error subsystem as $V_{2} = \frac{1}{2}\boldsymbol{z}^{\text{T}}_{2}\boldsymbol{J}\boldsymbol{z}_{2}$, taking the time-derivative of $V_{2}$, one can be obtained that:
\begin{equation}\label{dV2}
	\dot{V}_{2} = \boldsymbol{z}^{\text{T}}_{2}\boldsymbol{J}\dot{\boldsymbol{z}}_{2}
\end{equation}
Further, substituting the equation (\ref{subJz}) into equation (\ref{dV2}) and applying the Peter-Paul's inequality(Young's equality), we have:
\begin{equation}\label{dv2con}
	\begin{aligned}
		\dot{V}_{2} &= -K_{2}\|\boldsymbol{z}_{2}\|^{2} + \boldsymbol{z}_{2}^{\text{T}}\tilde{\boldsymbol{d}} - \boldsymbol{z}^{\text{T}}_{2}\boldsymbol{e}_{u} - \boldsymbol{z}^{\text{T}}_{2}\boldsymbol{P}_{q}\\
		&\le -\left(K_{2}-\frac{b}{2}\right)\|\boldsymbol{z}_{2}\|^{2} + \frac{1}{2b}\|\boldsymbol{e}_{u}\|^{2}+\boldsymbol{z}^{\text{T}}_{2}\tilde{\boldsymbol{d}}-\boldsymbol{z}^{\text{T}}_{2}\boldsymbol{P}_{q}
	\end{aligned}
\end{equation}
Noticing the term expressed as  $\boldsymbol{z}_{2}^{\text{T}}\tilde{\boldsymbol{d}}$, by applying the Lemma [REF] mentioned in [REF], we have:
\begin{equation}
	\begin{aligned}
		\boldsymbol{z}^{\text{T}}_{2}\tilde{\boldsymbol{d}} &= \sum_{i=1}^{3}z_{2i}\left[d_{i}-D_{m}\tanh\left(\frac{z_{2i}}{p_{i}}\right)\right]\\
		& \le  \sum_{i=1}^{3}D_{m}\left[|z_{2i}| - z_{2i} \tanh\left(\frac{z_{2i}}{p_{i}}\right)\right]\\
		& \le \sum_{i=1}^{3}0.2785p_{i}D_{m}
	\end{aligned}
\end{equation}
For convenient, define $p_{1} = p_{2} = p_{3} = p$, hence $\boldsymbol{z}^{\text{T}}_{2}\tilde{\boldsymbol{d}}$ will be bounded by $0.2785\cdot3pD_{m} = 0.8355pD_{m}$. Similarly, we define $D_{0} = 0.8355pD_{m}$ for convenient. Notably, for $\|\boldsymbol{z}_{2}\|^{2}$, a vital relationship should be noticed that $\frac{1}{2}\max\left(\lambda_{\boldsymbol{J}}\right)\|\boldsymbol{z}_{2}\|^{2} \ge V_{2}$ will be satisfied, thus we can yield:
\begin{equation}\label{dotV2}
	\dot{V}_{2}\le-2\left(K_{2}-\frac{b}{2}\right)\frac{V_{2}}{\left(\lambda_{\boldsymbol{J}}\right)_{\text{max}}}+\frac{1}{2b}\|\boldsymbol{e}_{u}\|^{2} + D_{0} - \boldsymbol{z}^{\text{T}}_{2}\boldsymbol{P}_{q}
\end{equation}

\subsection{System Stability with Actuator On}
\begin{theorem}
	\label{T1}
	For the system defined in equation (\ref{errorsystem}), under the given virtual control law $\boldsymbol{\alpha}$ in equation (\ref{virtual}), the given actual control law $\boldsymbol{u}$ in equation (\ref{controllaw}) and the designed event trigger mechanism (\ref{trigoff}) and (\ref{trigon}), the system will be exponentially converged.
\end{theorem}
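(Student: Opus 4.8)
The plan is to construct the composite Lyapunov function $V = V_1 + V_2$ and show that, across both the actuator-on and actuator-off intervals, $V$ is dominated by an exponentially decaying envelope, thereby yielding exponential convergence to a small residual set. First I would add the two derivative bounds already established in (\ref{dV1}) and (\ref{dotV2}). The crucial observation is that the cross-coupling terms cancel: the $+\boldsymbol{P}^{\text{T}}_{q}\boldsymbol{z}_{2}$ appearing in $\dot{V}_{1}$ is exactly annihilated by the $-\boldsymbol{z}^{\text{T}}_{2}\boldsymbol{P}_{q}$ in $\dot{V}_{2}$, which is the standard backstepping cancellation and the very reason $\boldsymbol{P}_{q}$ was inserted into the control law (\ref{controllaw}). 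This leaves
\begin{equation}
	\dot{V} \le -\lambda_{1}V_{1} - \lambda_{2}V_{2} + \frac{1}{2b}\|\boldsymbol{e}_{u}\|^{2} + D_{0},
\end{equation}
with $\lambda_{1} = |q_{e0}|M_{\omega} - 4\left(|\dot{\rho}_{i}|/\rho_{i}\right)_{\text{max}}$ and $\lambda_{2} = 2\left(K_{2}-b/2\right)/\left(\lambda_{\boldsymbol{J}}\right)_{\text{max}}$, both positive under the parameter-selection principle. Defining $\lambda = \min\left(\lambda_{1},\lambda_{2}\right)$ collapses this to $\dot{V} \le -\lambda V + \frac{1}{2b}\|\boldsymbol{e}_{u}\|^{2} + D_{0}$.

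Next, on each actuator-on interval $[t^{\text{on}}_{k},t^{\text{off}}_{k})$, the turn-off trigger (\ref{trigoff}) guarantees $\|\boldsymbol{e}_{u}\|^{2} \le s e^{-\beta t} + m$, since the accumulated error starts at zero and the actuator is held only until this threshold is reached. Substituting yields $\dot{V} \le -\lambda V + \frac{s}{2b}e^{-\beta t} + \left(\frac{m}{2b}+D_{0}\right)$, a scalar linear differential inequality whose solution, via the comparison lemma, bounds $V\left(t\right)$ by the sum of an $e^{-\lambda t}$ term, an $e^{-\beta t}$ term arising from the decaying forcing, and the steady residual $\frac{1}{\lambda}\left(\frac{m}{2b}+D_{0}\right)$. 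Hence on the on-phase $V$ decays exponentially toward a residual set whose size is governed by $m$ and $D_{0}$.

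For the actuator-off intervals $[t^{\text{off}}_{k},t^{\text{on}}_{k+1})$ I would invoke the turn-on trigger (\ref{trigon}): the actuator stays off precisely as long as $S\left(t\right)-V\left(t\right) > \delta_{m}$, i.e. $V\left(t\right) < S\left(t\right)-\delta_{m} < S\left(t\right)$. Since $S\left(t\right) = \left(S_{0}-S_{\infty}\right)e^{-kt}+S_{\infty}$ decays exponentially to $S_{\infty}$, the trajectory $V$ is trapped beneath this same exponential envelope throughout every off-phase even though it need not be monotone there. Concatenating the two phases, $V\left(t\right)$ never exceeds the larger of the on-phase and off-phase exponential envelopes plus the combined residual, so $V$ — and therefore $\boldsymbol{\varepsilon}_{q}$ and $\boldsymbol{z}_{2}$, hence $\boldsymbol{q}_{ev}$ and $\boldsymbol{\omega}_{s}$ — converges exponentially to a small neighborhood of the origin.

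The hardest part is the phase-stitching argument: I must verify that $S\left(t\right)$ is tuned so its value at each turn-off instant satisfies the consistency condition $S\left(t^{\text{off}}_{k}\right)-V\left(t^{\text{off}}_{k}\right) > \delta_{m}$ (otherwise the actuator could not legitimately switch off), and that the off-phase floor $S_{\infty}$ and the on-phase residual $\frac{1}{\lambda}\left(\frac{m}{2b}+D_{0}\right)$ are mutually compatible, so that the concatenated bound is genuinely a single decaying exponential envelope rather than merely a piecewise bound. A secondary technical point is excluding Zeno behaviour: the strictly positive offset $m$ forces $\|\boldsymbol{e}_{u}\|$ to accumulate a nonzero amount before the turn-off fires, while $T_{\text{max}}$ caps the on-time, and together these should furnish a positive minimum inter-event time that legitimizes the infinite concatenation of intervals.
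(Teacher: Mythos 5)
Your proposal is correct in substance, and its first half coincides with the paper's own proof of Theorem \ref{T1}: the paper likewise forms $V = V_{1}+V_{2}$, lets the $\boldsymbol{P}_{q}^{\text{T}}\boldsymbol{z}_{2}$ terms cancel, takes $C_{1}=\left(B_{1},B_{2}\right)_{\text{min}}$, inserts the turn-off trigger bound $\|\boldsymbol{e}_{u}\|^{2}\le se^{-\beta t}+m$, and applies the comparison lemma to obtain an exponential envelope with residual $\frac{1}{C_{1}}\left(D_{0}+\frac{m}{2b}\right)$; the paper additionally imposes $V_{0}>0$ and $C_{1}-\beta>0$ so that the bound collapses into the single curve $\left(V\left(t^{\text{on}}_{k}\right)-V_{\infty}\right)e^{-\beta\left(t-t^{\text{on}}_{k}\right)}+V_{\infty}$, a step you only gesture at. Where you genuinely diverge is in scope and in the off-phase argument. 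The paper's proof of Theorem \ref{T1} deliberately stops at the actuator-on interval: it asserts exponential convergence only for $t\in\left[t^{\text{on}}_{k},t^{\text{off}}_{k}\right]$, and defers the actuator-off behaviour to Theorem \ref{T2} and the theorem following it, and the stitching and Zeno questions to the appendices. Your off-phase treatment also uses a different decomposition: you read trigger (\ref{trigon}) as monitoring the lumped $V$, so the whole trajectory sits under $S\left(t\right)-\delta_{m}$ at once; the paper instead applies the trigger to $V_{2}$ against $S_{2}$ only, extracts $\|\boldsymbol{z}_{2}\|^{2}\le 2S_{2}/\left(\lambda_{\boldsymbol{J}}\right)_{\text{min}}$, and then propagates this through the inequality $\dot{V}_{1}\le -C_{\varepsilon}V_{1}+\boldsymbol{P}^{\text{T}}_{q}\boldsymbol{z}_{2}$ to show $V_{1}$ decays exponentially provided $C_{\varepsilon}>\beta/2$. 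Your version is simpler and is internally consistent with the generic wording of (\ref{trigon}), but the scheme the paper actually implements and simulates monitors $S_{2}$ versus $V_{2}$, so the extra propagation step through $V_{1}$ cannot be skipped if the proof is to match the realized controller. Finally, the two difficulties you flag at the end are exactly what the paper resolves elsewhere: the consistency condition at turn-off is guaranteed by the lower bound $N_{k}=\left(S_{2\infty}-V_{2\infty}\right)\left(1-e^{-\beta\Delta\tau}\right)$ together with the parameter requirement $\delta_{m}<N_{k}$, and the positive minimum inter-event times for both triggers are established separately as Theorems \ref{T3} and \ref{T4} in the appendices, so your instinct about what remains to be verified is exactly right, it is simply not part of Theorem \ref{T1} in the paper's organization.
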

\begin{proof}
	Defining a lumped Lyapunov function $V\left(t\right) = V_{1}\left(t\right) + V_{2}\left(t\right)$ for further analysis. We will omit the time variable $t$ for brevity occasionally.
	Taking the time-derivative of $V$ and combined with the result in equation (\ref{dV1}) and (\ref{dotV2}), one can be yielded that:
	\begin{equation}\begin{aligned}
			\dot{V} &= \dot{V}_{1} + \dot{V}_{2}\\
			&\le -B_{1}V_{1} - B_{2}V_{2} + \frac{1}{2b}\|\boldsymbol{e}_{u}\|^{2} + D_{0}
		\end{aligned}
	\end{equation}
	where $B_{1}$ is defined as $B_{1} = |q_{e0}|M_{\omega} - 4\left(\dot{\rho}_{i}/\rho_{i}\right)_{\text{max}}$ and $B_{2}$ is defined as $B_{2} = 2\left(K_{2}-b/2\right)/\left(\lambda_{\boldsymbol{J}}\right)_{\text{max}}$. A main principle for parameter selecting is that we should ensure $B_{1} > 0$ and $B_{2} > 0$ can be satisfied all along. Accordingly, it can be derived that:
	\begin{equation}
		\dot{V} \le -\left(B_{1},B_{2}\right)_{\text{min}}V + \frac{1}{2b}\|\boldsymbol{e}_{u}\|^{2} + D_{0}
	\end{equation}
	Further, we define $C_{1} = \left(B_{1},B_{2}\right)_{\text{min}}$ for convenient.
	Considering the aforementioned trigger condition depicted in equation (\ref{trigoff}), $\|\boldsymbol{e}_{u}\|^{2} \le s e^{-\beta t} + m$ will be hold during the control procedure. Hence, we can further obtain:
	\begin{equation}
		\dot{V}\left(t\right) \le -C_{1}V\left(t\right) + \frac{1}{2b}\left(se^{-\beta t}+m\right) + D_{0}
	\end{equation}
	By applying the comparison lemma stated in \cite{khalil2015nonlinear}, the Lyapunov trajectory $V\left(t\right) \left(t\in\left[t^{\text{on}}_{k} , t^{\text{off}}_{k}\right]\right)$ will remain under a curve $P_{v}\left(t\right)$ defined as the solution of:
	\begin{equation}\label{diff}
		\dot{P}_{v}\left(t\right)= -C_{1}P_{v}\left(t\right)
		+\frac{1}{2b}se^{-\beta t} + D_{0} + \frac{m}{2b}
	\end{equation} where the initial value is defined as $P_{v}\left(t^{\text{on}}_{k}\right) = V\left(t^{\text{on}}_{k}\right)$. Correspondingly, integrating the differential equation (\ref{diff}) from both side for $t\in\left[t^{\text{on}}_{k},t^{\text{off}}_{k}\right]$, one can be obtained that:
	\begin{equation}\begin{aligned}
			V\left(t\right) &\le V\left(t^{\text{on}}_{k}\right)e^{-C_{1}\left(t-t^{\text{on}}_{k}\right)} \\ 
			&\quad + \frac{1}{2b}\int_{t^{\text{on}}_{k}}^{t}e^{-C_{1}\left(t - \tau\right)}\cdot se^{-\beta \tau}d\tau\\
			&\quad + \left(D_{0}+\frac{m}{2b}\right)\int_{t^{\text{on}}_{k}}^{t}e^{-C_{1}\left(t - \tau\right)}d\tau
		\end{aligned}
	\end{equation}
	Notably, we can rearrange the solution into a standard exponential function. We further define $ V_{0} = \left[V\left(t^{\text{on}}_{k}\right)-\frac{se^{-\beta t^{\text{on}}_{k}}}{2b\left(C_{1}-\beta\right)}-\frac{1}{C_{1}}\left(D_{0}+\frac{m}{2b}\right)\right]$ for brevity, hence we have the following result, depicted as follows:
	\begin{equation}
		\label{Vtraj}
		\begin{aligned}
			V\left(t\right) &\le V\left(t^{\text{on}}_{k}\right)e^{-C_{1}\left(t-t^{\text{on}}_{k}\right)}\\
			& \quad + \frac{se^{-\beta t}}{2b\left(C_{1}-\beta\right)}-\frac{se^{-\beta t^{\text{on}}_{k}}}{2b\left(C_{1}-\beta\right)}e^{-C_{1}\left(t-t^{\text{on}}_{k}\right)}\\
			& \quad+ \frac{1}{C_{1}}\left(D_{0}+\frac{m}{2b}\right)\\
			&\quad-\frac{1}{C_{1}}\left(D_{0}+\frac{m}{2b}\right)e^{-C_{1}\left(t-t^{\text{on}}_{k}\right)}\\
			&=V_{0}e^{-C_{1}\left(t-t^{\text{on}}_{k}\right)}+\frac{se^{-\beta t}}{2b\left(C_{1}-\beta\right)}\\
			&\quad +\frac{1}{C_{1}}\left(D_{0}+\frac{m}{2b}\right)
		\end{aligned}
	\end{equation}
	The parameter designing should ensure that $V_{0}>0$, $C_{1}-\beta >0$ is hold. Therefore, for arbitrary $t\in\left(0,+\infty\right)$, $e^{-C_{1}t}<e^{-\beta t}$ will be always hold. Substituting this relationship into equation (\ref{Vtraj}) and rearrange the equation, it can be further obtained that:
	\begin{equation}\label{eqn17}
		\begin{aligned}
			V\left(t\right) &\le  V_{0}e^{-\beta\left(t-t^{\text{on}}_{k}\right)} + \frac{se^{-\beta t^{\text{on}}_{k}}e^{-\beta\left(t-t^{\text{on}_{k}}\right)}}{2b\left(C_{1}-\beta\right)}\\
			& +\frac{1}{C_{1}}\left(D_{0}+\frac{m}{2b}\right)\\
			& = \left(V\left(t^{\text{on}}_{k}\right)-\frac{1}{C_{1}}\left(D_{0}+\frac{m}{2b}\right)\right)e^{-\beta\left(t-t^{\text{on}}_{k}\right)}\\
			&+\frac{1}{C_{1}}\left(D_{0}+\frac{m}{2b}\right)
		\end{aligned}
	\end{equation}
	Define $V_{\infty} = \frac{1}{C_{1}}\left(D_{0}+\frac{m}{2b}\right)$, the result in equation (\ref{eqn17}) can be rewritten as:
	\begin{equation}\label{V2CONC}
		V\left(t\right) \le \left(V\left(t^{\text{on}}_{k}\right)-V_{\infty}\right)e^{-\beta\left(t-t^{\text{on}}_{k}\right)}+V_{\infty}
	\end{equation}
	The right hand side of the inequality (\ref{V2CONC}) is an exponential function, starting from $t = t^{\text{on}}_{k}$. The initial value is $V\left(t^{\text{on}}_{k}\right)$ and the terminal value is tend to be $V_{\infty}$.
	Such a result indicates that under the given control law $\boldsymbol{u}$ stated in equation (\ref{controllaw}) and the event trigger mechanism (\ref{trigoff}), the whole system will be exponentially converged during the control time interval that $t\in\left[t^{\text{on}}_{k},t^{\text{off}}_{k}\right]$. It can be concluded that during the control procedure, the Lyapunov trajectory $V\left(t\right)$ will remain under an exponential function. This ensures that the system will strictly converged during the control procedure, ensuring the system's stability for $t\in\left[t^{\text{on}}_{k},t^{\text{off}}_{k}\right]$. This completes the proof of theorem (\ref{T1}). 
\end{proof}

\subsection{System Behavior with Actuator Off}\label{secAoff}
 In this subsection, we discuss the system behavior with actuator off. Define the corresponding performance evaluation function of $V_{2}\left(t\right)$ as $S_{2}\left(t\right)$, it can be noticed that $S_{2}\left(t\right) > V_{2}\left(t\right)$ will be satisfied all along due to the trigger mechanism. However, to further derive the conclusion about the trigger time, the upper boundary for $V_{2}\left(t\right)$ during the non-control time interval is investigated in this subsection.

\begin{theorem}\label{T2}
	For the second-layer error system $\boldsymbol{z}_{2}$ with none control input, i.e. $\boldsymbol{u}\left(t\right)=\boldsymbol{0}$, the system trajectory $V_{2}\left(t\right)$ will still remain beneath a specific function dentoed as $P_{l}\left(t\right)$, specified in equation (\ref{V2bound}).
\end{theorem}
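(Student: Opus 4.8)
The plan is to derive a differential inequality for $V_2(t)$ in the uncontrolled regime and then integrate it by a comparison argument. First I would specialize the $\boldsymbol{z}_2$-dynamics (\ref{systemtrans}) to the case $\boldsymbol{u}(t)=\boldsymbol{0}$: with the actuator off, the stabilizing feedback $-K_2\boldsymbol{z}_2$, the disturbance compensation $\hat{\boldsymbol{d}}$, the feedforward $\boldsymbol{J}\dot{\boldsymbol{\alpha}}$ embedded in $\boldsymbol{u}$, and the coupling term $\boldsymbol{P}_q$ all disappear, leaving the open-loop relation $\boldsymbol{J}\dot{\boldsymbol{z}}_2=-\boldsymbol{\omega}^{\times}_s\boldsymbol{J}\boldsymbol{\omega}_s+\boldsymbol{d}-\boldsymbol{J}\dot{\boldsymbol{\alpha}}$. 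Differentiating $V_2=\tfrac{1}{2}\boldsymbol{z}^{\text{T}}_2\boldsymbol{J}\boldsymbol{z}_2$ then gives $\dot{V}_2=-\boldsymbol{z}^{\text{T}}_2\boldsymbol{\omega}^{\times}_s\boldsymbol{J}\boldsymbol{\omega}_s+\boldsymbol{z}^{\text{T}}_2\boldsymbol{d}-\boldsymbol{z}^{\text{T}}_2\boldsymbol{J}\dot{\boldsymbol{\alpha}}$, so the whole argument reduces to bounding these three inner products.

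Next I would estimate each term with the practical bounds already collected. The gyroscopic term is controlled by $|\boldsymbol{z}^{\text{T}}_2\boldsymbol{\omega}^{\times}_s\boldsymbol{J}\boldsymbol{\omega}_s|\le(\lambda_{\boldsymbol{J}})_{\text{max}}B^2_{\omega}\|\boldsymbol{z}_2\|$ using $\|\boldsymbol{\omega}_s\|\le B_{\omega}$ from Remark \ref{boundedv}; the disturbance term by $|\boldsymbol{z}^{\text{T}}_2\boldsymbol{d}|\le D_m\|\boldsymbol{z}_2\|$ from the assumption $\|\boldsymbol{d}\|\le D_m$; and the feedforward term by $|\boldsymbol{z}^{\text{T}}_2\boldsymbol{J}\dot{\boldsymbol{\alpha}}|\le(\lambda_{\boldsymbol{J}})_{\text{max}}B_{\alpha}\|\boldsymbol{z}_2\|$ from Remark \ref{boundedAlpha}. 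Collecting these into a single constant $G:=(\lambda_{\boldsymbol{J}})_{\text{max}}(B^2_{\omega}+B_{\alpha})+D_m$ yields the clean bound $\dot{V}_2\le G\|\boldsymbol{z}_2\|$.

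The crucial step is converting this linear-in-$\|\boldsymbol{z}_2\|$ bound into a closed inequality in $V_2$ alone. Using $V_2\ge\tfrac{1}{2}(\lambda_{\boldsymbol{J}})_{\text{min}}\|\boldsymbol{z}_2\|^2$ I would substitute $\|\boldsymbol{z}_2\|\le\sqrt{2V_2/(\lambda_{\boldsymbol{J}})_{\text{min}}}$ to get $\dot{V}_2\le c\sqrt{V_2}$ with $c:=G\sqrt{2/(\lambda_{\boldsymbol{J}})_{\text{min}}}$. This separable inequality is handled by the change of variable $W=\sqrt{V_2}$ (equivalently the comparison lemma of \cite{khalil2015nonlinear}), which reduces it to $\dot{W}\le c/2$; integrating from the turn-off instant $t^{\text{off}}_k$ and squaring produces the quadratic envelope
\begin{equation}\label{V2bound}
	V_2(t)\le P_l(t):=\left(\sqrt{V_2(t^{\text{off}}_k)}+\frac{c}{2}\left(t-t^{\text{off}}_k\right)\right)^2 .
\end{equation}

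I expect the main obstacle to be the gyroscopic cross term $-\boldsymbol{z}^{\text{T}}_2\boldsymbol{\omega}^{\times}_s\boldsymbol{J}\boldsymbol{\omega}_s$: because $\boldsymbol{z}_2\neq\boldsymbol{\omega}_s$ during an active maneuver, the skew-symmetry of $\boldsymbol{\omega}^{\times}_s$ cannot be invoked to annihilate it as in the standard passivity computation, so its boundedness rests entirely on $\|\boldsymbol{\omega}_s\|\le B_{\omega}$ persisting throughout the uncontrolled window. Justifying that this practical bound holds while the actuator is off, rather than only when $\boldsymbol{\omega}_s$ tracks $\boldsymbol{\alpha}$, is the delicate point; it is precisely the function of the turn-on trigger (\ref{trigon}) to close the off-window before the envelope $P_l(t)$ grows enough to threaten this bound, which links Theorem \ref{T2} back to the dwell-time analysis.
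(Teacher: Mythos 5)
Your proposal is correct and follows essentially the same route as the paper: the open-loop bound $\dot{V}_2\le G\|\boldsymbol{z}_2\|$ with $G=(\lambda_{\boldsymbol{J}})_{\text{max}}(B^2_{\omega}+B_{\alpha})+D_m$ is exactly the paper's $U_z$, the substitution $\|\boldsymbol{z}_2\|\le\sqrt{2V_2/(\lambda_{\boldsymbol{J}})_{\text{min}}}$ reproduces the paper's comparison equation $\dot{P}_l=a_0 P_l^{1/2}$, and your squared envelope $\bigl(\sqrt{V_2(t^{\text{off}}_k)}+\tfrac{c}{2}(t-t^{\text{off}}_k)\bigr)^2$ is algebraically identical to the paper's $\tfrac{a_0^2}{4}(t-t^{\text{off}}_k+C_m)^2$. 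Your closing observation that the bound $\|\boldsymbol{\omega}_s\|\le B_{\omega}$ must persist during the uncontrolled window is a legitimate caveat that the paper glosses over, but it does not alter the argument.
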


\begin{proof}
	When the controller is turned off such that $\boldsymbol{u}\left(t\right) = \boldsymbol{0}$ is hold, one can be obtained that:
	\begin{equation}
		\boldsymbol{J}\dot{\boldsymbol{z}}_{2} = -\boldsymbol{\omega}^{\times}_{s}\boldsymbol{J}\boldsymbol{\omega}_{s} + \boldsymbol{u}\left(t\right)- \boldsymbol{e}_{u} +  \boldsymbol{d} - \boldsymbol{J}\dot{\boldsymbol{\alpha}}
	\end{equation}
	Notably, $\boldsymbol{u}\left(t\right)-\boldsymbol{e}_{u}  = \boldsymbol{0}$ will be hold under current condition. Accordingly, considering the time-derivative of $V_{2}$, we have:
	\begin{equation}
		\begin{aligned}
			\dot{V}_{2} = \boldsymbol{z}^{\text{T}}_{2}\boldsymbol{J}\dot{\boldsymbol{z}}_{2} \le \|\boldsymbol{z}_{2}\|\|\boldsymbol{J}\dot{\boldsymbol{z}}_{2}\|
		\end{aligned}
	\end{equation}
	In conclusion, it should be noticed that $\|-\boldsymbol{\omega}^{\times}_{s}\boldsymbol{J}\boldsymbol{\omega}_{s}\| \le \|\boldsymbol{J}\|\|\boldsymbol{\omega}_{s}\|^{2}\le \left(\lambda_{\boldsymbol{J}}\right)_{\text{max}}B^{2}_{\omega}$ is hold, where $B_{\omega}$ is the upper boundary for angular velocity constraint defined in remark [\ref{boundedv}]. Further, $\|\dot{\boldsymbol{\alpha}}\| \le B_{\alpha}$ is hold according to remark (\ref{boundedAlpha}). Combining these results, this derives the following result:
	\begin{equation}
		\|\boldsymbol{J}\dot{\boldsymbol{z}}_{2}\|\le\left(\lambda_{\boldsymbol{J}}\right)_{\text{max}}B^{2}_{\omega}+D_{m}+\left(\lambda_{\boldsymbol{J}}\right)_{\text{max}}B_{\alpha}
	\end{equation}
	We further define the upper boundary for  $\|\boldsymbol{J}\dot{\boldsymbol{z}}_{2}\|$ as $U_{z} = \left(\lambda_{\boldsymbol{J}}\right)_{\text{max}}B^{2}_{\omega}+D_{m}+\left(\lambda_{\boldsymbol{J}}\right)_{\text{max}}B_{\alpha}$. Correspondingly, $\dot{V}_{2}$ satisfies the following relationship:
	\begin{equation}
		\dot{V}_{2} \le U_{z}\|\boldsymbol{z}_{2}\|
	\end{equation}
According to the comparison lemma \cite{khalil2015nonlinear}, the $V_{2}\left(t\right)$ trajectory will remain beneath a function $P_{l}$, defined as the solution of the following differential equation, expressed as:
\begin{equation}\label{diffz2}
	\begin{aligned}
		\dot{P}_{l} &= U_{z}\sqrt{\frac{2}{\left(\lambda_{\boldsymbol{J}}\right)_{\text{min}}}}P^{1/2}_{l}
	\end{aligned}
\end{equation}
where the initial condition should satisfy $ P_{l}\left(t^{\text{off}}_{k}\right) = V_{2}\left(t^{\text{off}}_{k}\right)$. For $t\in\left[t^{\text{off}}_{k},t^{\text{on}}_{k+1}\right]$, integrating the equation (\ref{diffz2}) from both side, we have: 
\begin{equation}\label{V2bound}
	V_{2}\left(t\right)\le P_{l}\left(t\right) = \frac{a^{2}_{0}}{4}\left(t - t^{\text{off}}_{k} + C_{m}\right)^{2}
\end{equation}
where $a_{0}$ is defined as $a_{0}=U_{z}\sqrt{2/\left(\lambda_{\boldsymbol{J}}\right)_{\text{min}}}$.
Considering the initial condition of $P_{l}\left(t\right)$, we have $a^{2}_{0}C^{2}_{m}/4 = V_{2}\left(t^{\text{off}}_{k}\right)$, hence it derives $C_{m} = \frac{2}{a_{0}}\sqrt{V_{2}\left(t^{\text{off}}_{k}\right)}$.
Equivalently, equation (\ref{V2bound}) can be rewritten as:
\begin{equation}\begin{aligned}\label{Vbound2}
		V_{2}\left(t\right)
		\le V_{2}\left(t^{\text{off}}_{k}\right) + \frac{a^{2}_{0}}{4}\left(t-t^{\text{off}}_{k}\right)^{2} + \frac{a^{2}_{0}}{2}\left(t-t^{\text{off}}_{k}\right)C_{m}\\
	\end{aligned}
\end{equation} 

As a result, we have the upper boundary of the system state $V_{2}\left(t\right)$ when there is no control input, which is actually a quadratic function. This completes the proof of theorem [\ref{T2}].

Further, we discuss the stability of the whole system when there is no control input. Although the actuator is \textbf{turned off}, however, the trigger mechanism guarantees that $S_{2} > V_{2}$ will be always hold. Here we proof that the output-layer system Lyapunov trajectory $V_{1}\left(t\right)$ is still beneath a exponential-converged function.
\begin{theorem}
	For the first-layer error subsystem $\boldsymbol{\varepsilon}_{q}$, under the aforementioned controller, $V_{1}$ can be still exponentially converged under the current design.
\end{theorem}
It can be noticed that the second-layer error subsystem will ensure that $V_{2}\left(t\right) < S_{2}\left(t\right)$ will be always satisfied. Hence we have:
\begin{equation}
	\frac{1}{2}\left(\lambda_{\boldsymbol{J}}\right)_{\text{min}}\|\boldsymbol{z}_{2}\|^{2} \le \frac{1}{2}\boldsymbol{z}^{\text{T}}_{2}\boldsymbol{J}\boldsymbol{z}_{2} \le S_{2}\left(t\right)
\end{equation}
Such a result can further derive that $\|\boldsymbol{z}_{2}\|^{2} \le \frac{2S_{2}}{\left(\lambda_{\boldsymbol{J}}\right)_{\text{min}}}$. Considering the first-layer error subsystem $\boldsymbol{\varepsilon}_{q}$, one can be obtained from equation (\ref{dV1}) that:
\begin{equation}
	\dot{V}_{1} \le -C_{\varepsilon}V_{1} + \boldsymbol{P}^{\text{T}}_{q}\boldsymbol{z}_{2}
\end{equation}
where $C_{\varepsilon}$ is defined as $C_{\varepsilon} = |q_{e0}M_{\omega}| - 4\left(\dot{\rho}_{i}/\rho_{i}\right)_{\text{max}}$. Applying the Peter-Paul inequality, it can be further derived that:
\begin{equation}\begin{aligned}\label{V1dot}
		\dot{V}_{1} &\le -C_{\varepsilon}V_{1} + \|\boldsymbol{P}_{q}\|\|\boldsymbol{z}_{2}\|\\
		&\le -C_{\varepsilon}V_{1}\\
		&\quad + B_{q}\left[\sqrt{\frac{2}{\left(\lambda_{\boldsymbol{J}}\right)_{\text{min}}}}\left(\sqrt{\left(S_{2}\left(0\right)-S_{2\infty}\right)}e^{-\frac{\beta}{2}t}\right)\right]\\
		&\quad + B_{q}\sqrt{\frac{2}{\left(\lambda_{\boldsymbol{J}}\right)_{\text{min}}}}\sqrt{S_{2\infty}}
	\end{aligned}
\end{equation}
Define $G_{s} = B_{q}\sqrt{\frac{2}{\left(\lambda_{\boldsymbol{J}}\right)_{\text{min}}}}\sqrt{\left(S_{2}\left(0\right)-S_{2\infty}\right)}$ and $G_{m} = B_{q}\sqrt{\frac{2}{\left(\lambda_{\boldsymbol{J}}\right)_{\text{min}}}}\sqrt{S_{2\infty}}$, the inequality (\ref{V1dot}) can be further rearranged as:
\begin{equation}
	\dot{V}_{1} \le -C_{\varepsilon}V_{1} + G_{s}e^{-\frac{\beta}{2} t} + G_{m}
\end{equation}
Such a differential equation is corresponding to the following solution if $C_{\varepsilon} > \frac{1}{2}\beta$ can be satisfied, expressed as follows:
\begin{equation}
	V_{1}\left(t\right) \le \left(V_{1}\left(0\right) - \frac{1}{C_{\varepsilon}}G_{m}\right)e^{-\frac{\beta}{2} t} + \frac{1}{C_{\varepsilon}}G_{m}
\end{equation}
Accordingly, it can be observed when $V_{2}$ remains beneath the exponentially-converged function $S_{2}$, the output layer $\boldsymbol{\varepsilon}$ will be exponentially converged, which completes the proof of the system stability.

Further, it can be proved that the inter-event time between the neighbored two triggering time instant $t^{\text{on}}_{k}$ and $t^{\text{off}}_{k}$ will be globally lower-bounded by positive constant. Similarly, the time interval between $t^{\text{off}}_{k}$ and $t^{\text{on}}_{k+1}$ will also be lower bounded. This indicates that the trigger condition will not be satisfied repeatedly in a tiny time interval that is tend to be zero, which ruling out the possibility of the zeno behavior. 
The detailed proof of the MIET (minimum inter-event time) of the presented composite event-trigger mechanism will be elaborated in Appendix [\ref{mietofff}] and [\ref{mieton}] later. A globally boundary will be given for both $t^{\text{off}}_{k} - t^{\text{on}}_{k}$ and $t^{\text{on}}_{k+1} - t^{\text{off}}_{k}$, which can be analyzed analytically.

\subsection{Suggestions for System Parameter Selecting}\label{para}
In this section, we give some suggestions about the main principle for parameter selecting of the proposed scheme.

\textbf{1.} For the stable of the first-layer error subsystem $\boldsymbol{\varepsilon}_{q}$, according to the equation (\ref{dV1}), $|q_{e0}M_{\omega}| - 4\left(|\dot{\rho}_{i}|/\rho_{i}\right)_{\text{max}} > 0$ should be satisfied for the exponential converge of $V_{1}$.

\textbf{2.} According to the analysis in subsection [\ref{sectrigon}], $C_{1} < \beta$ and $C_{\varepsilon} < \beta/2$ should be satisfied. This indicate that the design of $\beta$ should not be beyond the maximum ability of the system convergence capability.
Moreover, $\beta$ is also related to $S_{2}\left(t\right)$. As stated before, some fundamental requirements should be satisfied, such as $S_{2}\left(0\right) \ge V_{2}\left(0\right)$ and $S_{2\infty} > V_{2\infty}$. Here we focus on the choosing of $\beta$. 

Suppose $\beta$ is set too big. In that case, the performance envelope will converge rapidly, and $S_{2}\left(t\right)$ may already converge to the neighborhood of $S_{2\infty}$ even when the attitude maneuver has not finished. Since the spacecraft rotates at relatively big angular rates during the attitude maneuver (i.e., $\|\boldsymbol{\omega}_{s}\|$ is not close enough to zero), $V_{2}\left(t\right)$ trajectory will diverge rapidly under such a condition, and the trigger condition (\ref{trigon}) may be satisfied frequently, leading to too-frequent acting of the actuator. To avoid this phenomenon, $\beta$ should be set to a smaller value. A balance should be considered between the convergence rate of the $V_{2}\left(t\right)$ and the triggering frequency.

\textbf{3.} The design of the virtual control law $\boldsymbol{\alpha}$ should not be too aggressive. We suggest that the parameter should make the time-derivative of $\boldsymbol{\alpha}$ relatively small, which can be realized by choosing a smaller $k_{1}$. Intuitively, this can be explained as follows: Under the event-trigger mechanism, the angular velocity is a kind of stair-like signal generated by a zero-order holding sampler. Suppose such a stair signal is expected to track the smooth virtual control law $\boldsymbol{\alpha}$ changes rapidly. In that case, a more frequent updating of the zero-order sampling is required, which results in the over-frequently trigger of the trigger mechanism.

\end{proof}

\section{SIMULATION AND ANALYSIS}\label{sim}
In this section, several groups of simulation results are illustrated to validate the effectiveness of the proposed scheme. Firstly, based on an assumed attitude reorientation task, we present the proposed method's effect on handling such issues.
\subsection{Simulation Scenario Establishment}
In this section, the spacecraft is assumed to be a rigid-body spacecraft, of which the inertial matrix is assumed to be $\boldsymbol{J} = \left(2.8,2.5,1.9\right)kg\cdot m^{2}$. The maximum output torque is set to be $0.05N\cdot m$, and the allowed maximum continuously-output duration time is $1s$.
Further, we utilize the general periodically-varying external disturbance model, expressed as follows:
\begin{equation}
	\boldsymbol{d} = 
	\begin{bmatrix}
		1e-4 \cdot \left[4\sin\left(3\omega_{\text{dis}}t\right) + 3\cos\left(10\omega_{p}t\right) -2\right]\\ 	
		1e-4\left[-1.5\sin\left(2\omega_{\text{dis}}t\right) + 3\cos\left(5\omega_{\text{dis}}t\right) +2\right]\\ 	
		1e-4\left[3\sin\left(10\omega_{\text{dis}}t\right) - 8\cos\left(4\omega_{\text{dis}}t\right) +2\right]\\ 	
	\end{bmatrix}
\end{equation}
where $\omega_{\text{dis}}$ represents the angular frequency of the disturbance, set to be $\omega_{\text{dis}} = 0.01\text{rad}/s$. Such a disturbance is much bigger than the actual one in real aerospace environment, which is enough for validate the proposed scheme.
Subsequently, the angular rate of the spacecraft should not exceed $3^{\circ}/s$ during the attitude adjustment procedure, which is equivalent to $\|\boldsymbol{\omega}_{s}\| \le 0.035\text{rad}/s$. 
To satisfy the constraint, we set $k_{u} = 1.5$ and $M_{\omega} = 0.0524$.

Further, the performance requirements for the controller is stated as follows: \textbf{1.} The system should converge to $\left(|q_{ei}|\right) \le 1e-3 \left(i = 1,2,3\right)$ in no more than $60s$. \textbf{2.} The terminal control accuracy should be greater than $0.05^{\circ}$.

\subsection{Normal Attitude Reorientation Case}
Based on the aforementioned requirements and constraints, the desired attitude quaternion $\boldsymbol{q}_{d}$ and the desired attitude angular velocity $\boldsymbol{\omega}_{d}$ is chosen as follows, respectively.
\begin{equation}
	\begin{aligned}
		\boldsymbol{q}_{d}\left(0\right) &= \left[0.2,-0.5,-0.5,-0.6782\right]^{\text{T}}\\
		\boldsymbol{\omega}_{d}\left(t\right) &= \boldsymbol{0}
	\end{aligned}
\end{equation} 
while the initial condition of the spacecraft is expressed as follows:
\begin{equation}
	\begin{aligned}
	\boldsymbol{q}_{s}\left(0\right) &= \left[0.5175,0.3881,0.4200,0.6366\right]^{\text{T}}\\
	\boldsymbol{\omega}_{s}\left(0\right) &= \boldsymbol{0}
\end{aligned}
\end{equation}
The simulation results are illustrated in figure [\ref{fig_normaQE}][\ref{fig_normaWS}][\ref{fig_normaWSNorm}][\ref{fig_normaU}]. The time-evolution of $q_{ei}\left(t\right)\left(i=1,2,3\right)$ trajectory is illustrated in figure [\ref{fig_normaQE}], while the corresponding $\boldsymbol{\omega}_{s}$ and $\|\boldsymbol{\omega}_{s}\|$ trajectory is presented in figure [\ref{fig_normaWS}] and [\ref{fig_normaWSNorm}] respectively. The actuator output is illustrated in figure [\ref{fig_normaU}], with each component illustrated separately.

\begin{figure}[hbt!]
	\centering 
	\includegraphics[scale = 0.3]{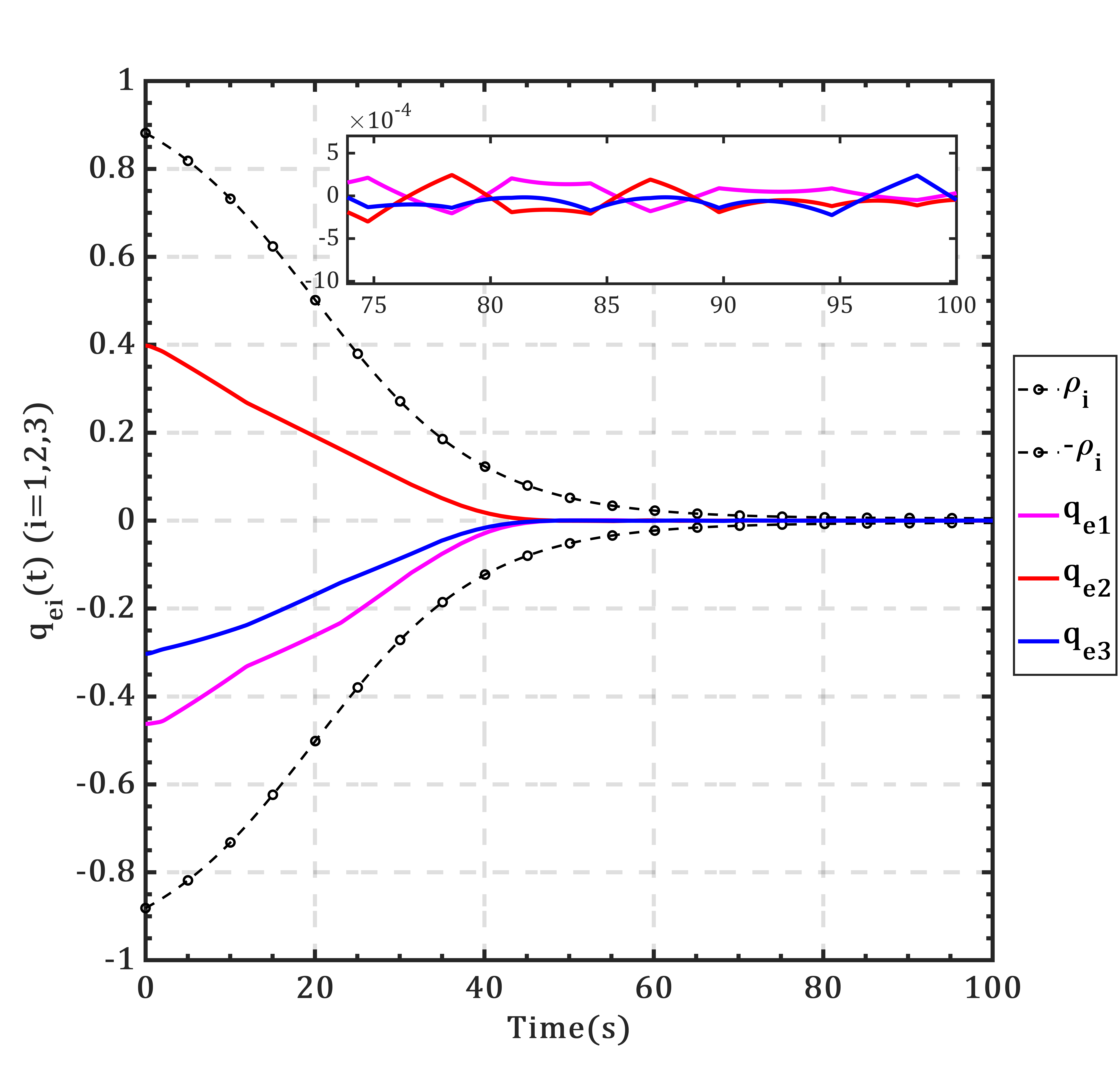}
	\caption{Time Response of attitude error quaternion $q_{ei}\left(t\right) \left(i=1,2,3\right)$}    
	\label{fig_normaQE}  
		\centering 
	\includegraphics[scale = 0.3]{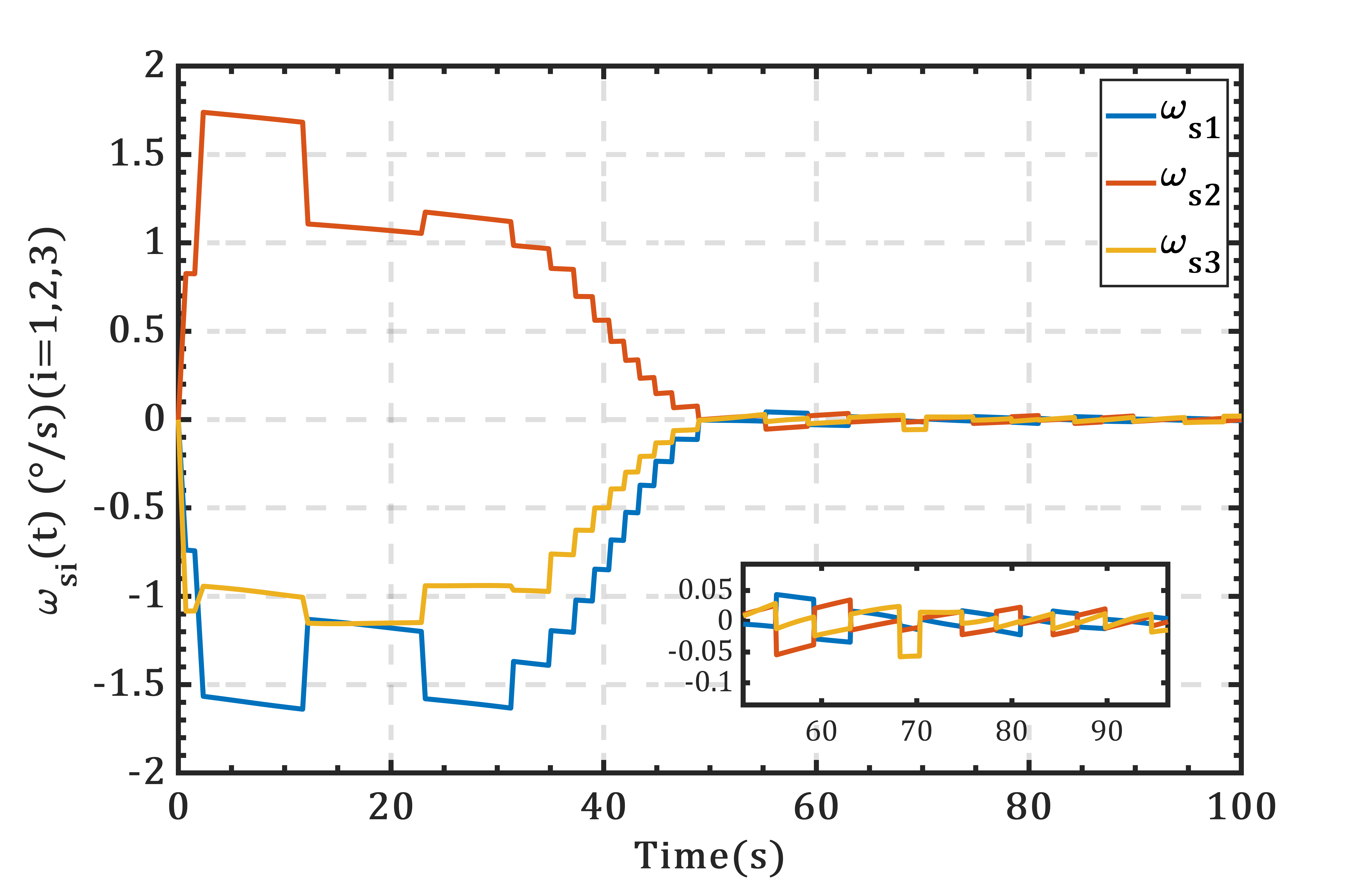}
	\caption{Time Response of attitude angular velocity $\omega_{si}\left(t\right) \left(i=1,2,3\right)$, corresponding to $x,y,z$ axis}    
	\label{fig_normaWS}  
	\centering 
	\includegraphics[scale = 0.3]{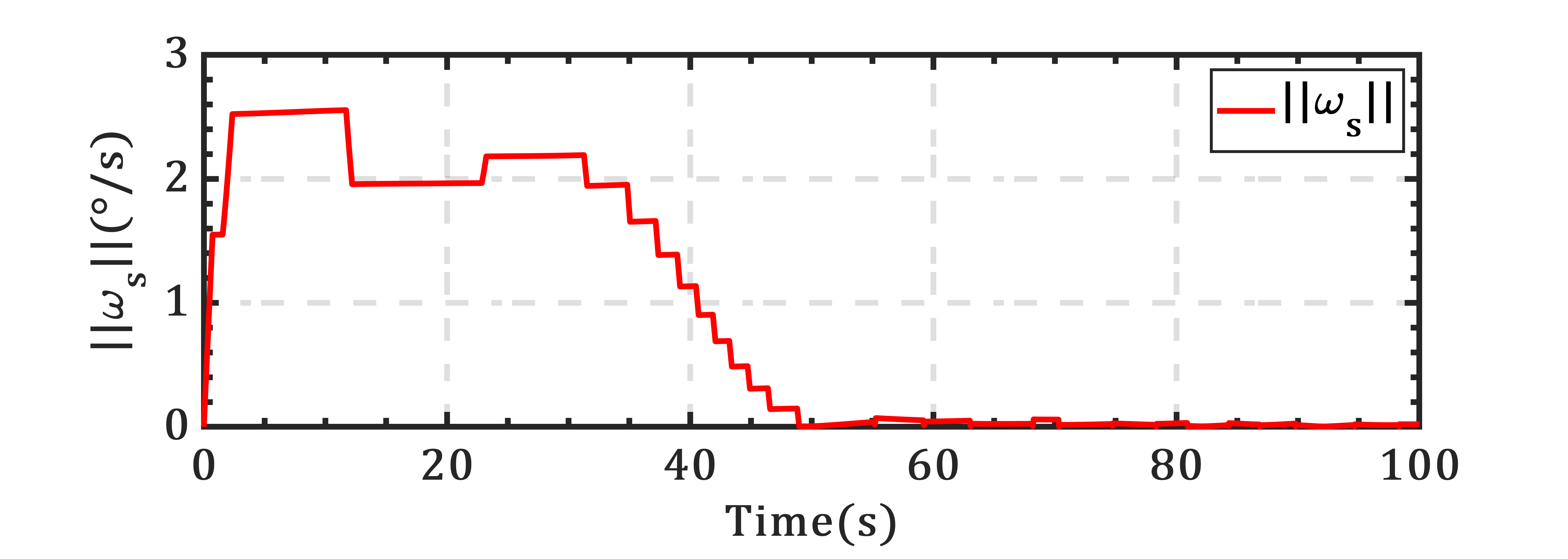}
	\caption{Time Response of the norm of the attitude angular velocity $\|\boldsymbol{\omega}_{s}\|$}    
	\label{fig_normaWSNorm}  
\end{figure}

\begin{figure}
			\centering 
	\includegraphics[scale = 0.32]{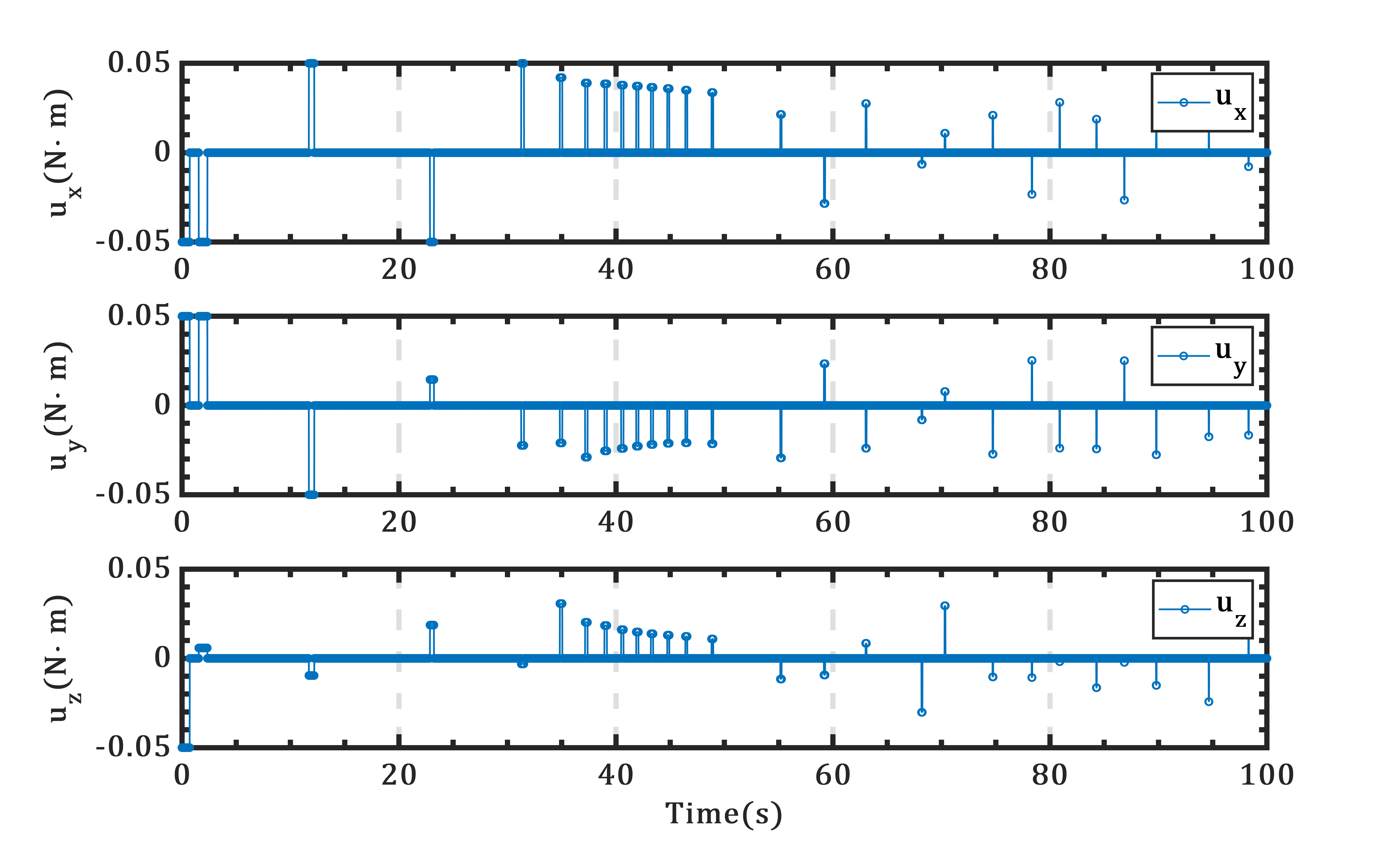}
	\caption{Time Response of the actuator output $\boldsymbol{u}$}    
	\label{fig_normaU}  
				\centering 
	\includegraphics[scale = 0.32]{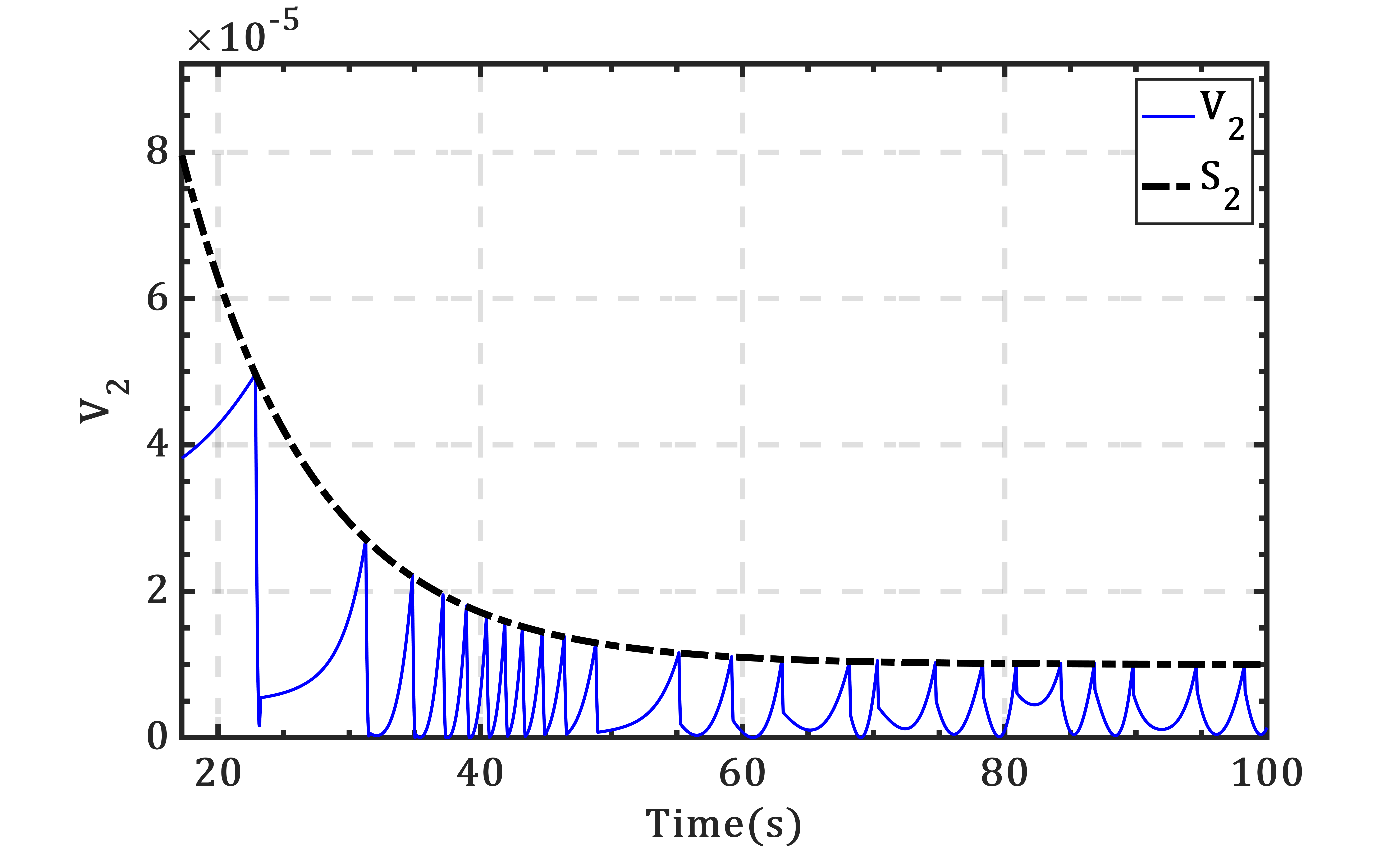}
	\caption{Time Response of $V_{2}$ and $S_{2}$}    
	\label{fig_normaS2}  
\end{figure}

As defined in figure [\ref{fig_normaQE}], the $i$ th component of the attitude error quaternion $\boldsymbol{q}_{e}$ is expressed as $q_{ei}\left(t\right)\left(i = 1,2,3\right)$.
In figure [\ref{fig_normaQE}], it can be observed that the state trajectory $q_{ei}\left(t\right)$ remains in the sector-like constraint region all along. The control accuracy is greater than $0.04^{\circ}/s$, as presented in the scaled part in figure [\ref{fig_normaQE}]. As a result, all the performance requirements are able to be satisfied. Further, it can be observed in figure [\ref{fig_normaU}] that the actuator of the control procedure is working impulsively, which indicates that the actuator acting frequency is significantly reduced. The control frequency for the actuator on each axis is almost $1\text{Hz}$, which is practically realizable in real engineering. Moreover, the angular rotation rate is strictly limited under the given boundary, as shown in figure [\ref{fig_normaWS}] and [\ref{fig_normaWSNorm}].

As an additional statement, the trajectory of $S_{2}\left(t\right)$ and $V_{2}\left(t\right)$ is illustrated in figure [\ref{fig_normaS2}]. It can be observed that the $V_{2}\left(t\right)$ trajectory is like a wavy-curve and remains beneath $S_{2}\left(t\right)$ permanently.

Notably, it can be observed that the trigger condition becomes relatively frequently at around $40s$. This is mainly due to the third issue we mentioned in subsection [\ref{para}]. Since the spacecraft need braking at current time, the virtual control law decreases at a relatively high rate, which causes the increasing of the acting frequency.

The simulation result shows that the proposed scheme is effective and is able to realize the desired control. The actuator is working impulsively during the whole control procedure, conspicuously reducing the work frequency of the propulsion system.

\section{CONCLUSION}
In order to enhance the potential application value of prescribed performance control in attitude control problems, this paper organically combines the intermittent control-based method and the prescribed performance control scheme together. A composite event-trigger mechanism is proposed for this issue, which is alternatively governed by two independent trigger mechanisms. Under the presented trigger mechanism, the actuator will be shut down when there is not necessary to hold a controller output, and the actuator's fuel consumption and acting frequency is conspicuously reduced. Further, specific suggestions for parameter selection are also provided, facilitating its application to various control tasks. As a result, the desired performance is achieved with low-frequency control while the angular velocity constraint is satisfied. It can be observed that through fewer times of control, the attitude deviation is converged to the steady state, with all the performance requirements satisfied. Such a simple controller structure is potentially applicable to the actual on-orbit attitude control scenario. The extensibility of the proposed scheme allows it to be combined with various controller structures, which is worth further investigation.
\section*{APPENDIX}
\subsection{MIET of Turn-Off Trigger Mechanism}\label{mietofff}
In this subsection, we further discuss the time interval between two neighbored $k$ th triggering instant $t^{\text{on}}_{k}$ and $t^{\text{off}}_{k}$, i.e., the minimum inter-event time (MIET) for the turn-off trigger mechanism (\ref{trigoff}). 
In this section, it is elaborated that the MIET time for the trigger mechanism (\ref{trigoff}) is lower-bounded, which is strongly related to the parameter and can be estimated analytically.
\begin{theorem}\label{T3}
	The Minimum Inter-event time (MIET) between $t^{\text{on}}_{k}$ and $t^{\text{off}}_{k}$ will be bounded, and the MIET will not only exists, but strongly related to the system parameters.
\end{theorem}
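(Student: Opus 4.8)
The plan is to exploit the fact that the measurement error $\boldsymbol{e}_{u}$ is reset to zero at every turn-on instant and can only grow at a finite rate, so the active threshold (which is bounded below by $m>0$) cannot be reached instantaneously. First I would note that the definition $\boldsymbol{e}_{u} = \boldsymbol{u}\left(t\right) - \boldsymbol{u}\left(t^{\text{on}}_{k}\right)$ gives $\boldsymbol{e}_{u}\left(t^{\text{on}}_{k}\right) = \boldsymbol{0}$, and that during the hold interval $\boldsymbol{u}\left(t^{\text{on}}_{k}\right)$ is constant, so $\dot{\boldsymbol{e}}_{u} = \dot{\boldsymbol{u}}$. The whole argument then reduces to producing a uniform bound $\|\dot{\boldsymbol{u}}\| \le B_{u}$ for some positive constant $B_{u}$.

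To obtain this bound I would differentiate the control law (\ref{controllaw}) term by term and control each piece with the boundedness results already established: $\|\boldsymbol{\omega}_{s}\| \le B_{\omega}$ from Remark \ref{boundedv}, $\|\dot{\boldsymbol{\alpha}}\| \le B_{\alpha}$ and $\|\ddot{\boldsymbol{\alpha}}\| \le B_{2\alpha}$ from Remark \ref{boundedAlpha}, $|\varepsilon_{qi}| \le k_{u}$ from Remark \ref{boundedeps}, and $\|\boldsymbol{d}\| \le D_{m}$ from the disturbance assumption. In particular $\dot{\boldsymbol{\omega}}_{s}$ is bounded because the dynamics (\ref{errdyna}) express it through the bounded quantities $\boldsymbol{\omega}_{s}$, the held input, and $\boldsymbol{d}$; this in turn bounds $\frac{d}{dt}\left(\boldsymbol{\omega}^{\times}_{s}\boldsymbol{J}\boldsymbol{\omega}_{s}\right)$, the term $\dot{\boldsymbol{z}}_{2} = \dot{\boldsymbol{\omega}}_{s} - \dot{\boldsymbol{\alpha}}$, and $\dot{\hat{\boldsymbol{d}}}$ (whose $\mathrm{sech}^{2}$ factor is at most unity). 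The remaining term $\dot{\boldsymbol{P}}_{q}$ is bounded because $\boldsymbol{\psi}_{q}$, $\boldsymbol{\eta}_{q}$, $\boldsymbol{F}_{s}$ and their derivatives are smooth functions of the bounded states and of the performance functions $\rho_{i}$, $\dot{\rho}_{i}$, $\ddot{\rho}_{i}$, which are themselves bounded by design. Collecting all these constants yields a single $B_{u}$.

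With $\|\dot{\boldsymbol{u}}\| \le B_{u}$ in hand, integrating from $t^{\text{on}}_{k}$ gives $\|\boldsymbol{e}_{u}\left(t\right)\| \le B_{u}\left(t - t^{\text{on}}_{k}\right)$. Since $s e^{-\beta t} + m \ge m$ for all $t$, the active condition $\|\boldsymbol{e}_{u}\|^{2} = s e^{-\beta t} + m$ cannot fire before $\|\boldsymbol{e}_{u}\|^{2}$ reaches $m$, i.e.\ before $\|\boldsymbol{e}_{u}\| = \sqrt{m}$. Substituting the linear growth bound gives $\sqrt{m} \le B_{u}\left(t^{\text{act}}_{k} - t^{\text{on}}_{k}\right)$, hence $t^{\text{act}}_{k} - t^{\text{on}}_{k} \ge \sqrt{m}/B_{u}$. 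Because $t^{\text{off}}_{k} = \min\left(t^{\text{act}}_{k}, t^{\text{pas}}_{k}\right)$ and $t^{\text{pas}}_{k} - t^{\text{on}}_{k} = T_{\text{max}}$ by construction, the inter-event time satisfies $t^{\text{off}}_{k} - t^{\text{on}}_{k} \ge \min\left(\sqrt{m}/B_{u}, T_{\text{max}}\right) > 0$, which is the required lower bound and is expressed explicitly through the design parameters $m$, $T_{\text{max}}$ and the state bounds entering $B_{u}$.

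I expect the main obstacle to be rigorously establishing $\|\dot{\boldsymbol{u}}\| \le B_{u}$, specifically the $\dot{\boldsymbol{P}}_{q}$ contribution, since it couples the performance-function derivatives with the state derivatives and requires tracking several nested bounds; every other step is an elementary consequence of the reset-to-zero property of $\boldsymbol{e}_{u}$ together with the strictly positive floor $m$ of the threshold.
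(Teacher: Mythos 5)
Your proposal is correct, and it reaches the theorem's conclusion by a genuinely simpler route than the paper's. The paper never forms a uniform constant bound on $\|\dot{\boldsymbol{u}}\|$: it bounds $\|\dot{\boldsymbol{e}}_{u}\| \le R_{1}K_{2}\|\boldsymbol{z}_{2}\| + R_{1}\|\boldsymbol{e}_{u}\| + R_{2}$ while retaining the state dependence, then eliminates $\|\boldsymbol{z}_{2}\|$ using the exponential Lyapunov estimate (\ref{V2CONC}) from Theorem \ref{T1} and eliminates $\|\boldsymbol{e}_{u}\|$ using the trigger condition (\ref{trigoff}) itself, integrates the resulting differential inequality to obtain the growth envelope (\ref{ee}), and finally compares that envelope against the \emph{time-varying} threshold $se^{-\beta t}+m$ through a first-order linearization of the auxiliary function $\mathfrak{H}$, arriving at the explicit bound in (\ref{tau}), namely $\Delta\tau \ge \left(se^{-\beta t^{\text{on}}_{k}}+m\right)/\left(s\beta e^{-\beta t^{\text{on}}_{k}}+M_{1}+M_{2}\right)$. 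You instead flatten every term into a single constant $B_{u}$, note that $\boldsymbol{e}_{u}$ resets to zero and grows at most linearly, and compare only against the constant floor $m$ of the threshold, yielding $t^{\text{off}}_{k}-t^{\text{on}}_{k} \ge \min\left(\sqrt{m}/B_{u}, T_{\text{max}}\right)$. What you lose is sharpness: discarding the $se^{-\beta t}$ portion of the threshold makes your bound more conservative, whereas the paper's bound credits the early (large-threshold) phase with longer guaranteed intervals. What you gain is transparency, a bound that is manifestly uniform in $k$, and an explicit treatment of the $T_{\text{max}}$ cap, which the paper's appendix silently ignores even though $t^{\text{off}}_{k}=\min\left(t^{\text{act}}_{k},t^{\text{pas}}_{k}\right)$. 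On rigor, the two arguments are on equal footing: your uniform $B_{u}$ requires boundedness of the held input, of $\dot{\boldsymbol{\omega}}_{s}$, and of $\dot{\boldsymbol{P}}_{q}$, all of which follow from (or are assumed at the same level as) the paper's own practical-boundedness claims in Remarks \ref{boundedv}--\ref{boundedeps} and its stated assumption that $\|\dot{\boldsymbol{P}}_{q}\|$ admits a constant bound.
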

\begin{proof}
	In order to analyze the time interval of the trigger condition, we start from considering the time-response of $\|\boldsymbol{e}_{u}\|^{2}$.
	For the time-derivative of $\boldsymbol{e}^{\text{T}}_{u}\boldsymbol{e}_{u}$, we have $d\left(\|\boldsymbol{e}_{u}\|^{2}\right)/dt = 2\boldsymbol{e}^{\text{T}}_{u}\dot{\boldsymbol{e}}_{u}$.
	Taking the time-derivative of $\boldsymbol{e}_{u}$, one has:
	\begin{equation}
		\begin{aligned}
			\dot{\boldsymbol{e}_{u}} = \dot{\boldsymbol{u}} &= -\dot{\boldsymbol{\omega}}^{\times}_{s}\boldsymbol{J}\boldsymbol{\omega}_{s} - \boldsymbol{\omega}^{\times}_{s}\boldsymbol{J}\dot{\boldsymbol{\omega}}_{s} -K_{2}\left(\dot{\boldsymbol{\omega}}_{s}-\dot{\boldsymbol{\alpha}}\right)\\
			& \quad + \boldsymbol{J}\ddot{\boldsymbol{\alpha}} - \dot{\hat{\boldsymbol{d}}} - \dot{\boldsymbol{P}}_{q}
		\end{aligned}
	\end{equation}
	Further, we have $\dot{\hat{\boldsymbol{d}}} = \frac{D_{m}}{p}\left[\left(1-\tanh^{2}\left(z_{2i}/p\right)\right)\dot{z}_{2i}\right]_{\text{v}}$. Therefore, it can be concluded that:
	\begin{equation}
		\|\dot{\hat{\boldsymbol{d}}}\|\le\frac{D_{m}}{p}\|\dot{\boldsymbol{z}}_{2}\| = \frac{D_{m}}{p}\left[\|\dot{\boldsymbol{\omega}}_{s}\| + \|\dot{\boldsymbol{\alpha}}\|\right]
	\end{equation}
	Meanwhile, according to the aforementioned analysis, we have $\|\dot{\boldsymbol{\alpha}}\| \le B_{\alpha}$ and $\|\ddot{\boldsymbol{\alpha}}\|\le B_{2\alpha}$, as analyzed in remark \ref{boundedAlpha}.
	Sorting out these conclusions, $\|\dot{\boldsymbol{u}}\|$ can further considered as follows:
	\begin{equation}\begin{aligned}\label{dotU}
			\|\dot{\boldsymbol{u}}\|&\le \left(\lambda_{\boldsymbol{J}}\right)_{\text{max}}B_{\omega}\|\dot{\boldsymbol{\omega}}_{s}\|+B_{\omega}\|\boldsymbol{J}\dot{\boldsymbol{\omega}}_{s}\|\\
			&\quad+\left(K_{2}+\frac{D_{m}}{p}\right)\|\dot{\boldsymbol{\omega}}_{s}\| + \left(K_{2}+\frac{D_{m}}{p}\right)\|\dot{\boldsymbol{\alpha}}\|\\
			&\quad +\left(\lambda_{\boldsymbol{J}}\right)_{\text{max}}B_{2\alpha} + \|\dot{\boldsymbol{P}}_{q}\|
		\end{aligned}
	\end{equation}
	
	Next, we discuss about the range of $\|\dot{\boldsymbol{\omega}}_{s}\|$. Since the following relationship holds all along with $\boldsymbol{u}\left(t\right)$:
	\begin{equation}\label{jz2}
		\boldsymbol{J}\dot{\boldsymbol{\omega}}_{s} = -K_{2}\boldsymbol{z}_{2} - \boldsymbol{e}_{u} + \tilde{\boldsymbol{d}} + \boldsymbol{J}\dot{\boldsymbol{\alpha}}-\boldsymbol{P}_{q}
	\end{equation}
	where $\tilde{\boldsymbol{d}} = \boldsymbol{d}-\hat{\boldsymbol{d}}$.
	It can be further obtained that $\|\tilde{\boldsymbol{d}}\| \le \|\boldsymbol{d}\| + \|\hat{\boldsymbol{d}}\| \le \left(1+\sqrt{3}\right)D_{m}$ is hold. Sorting up these conclusion, we can further obtain that:
	\begin{equation}\begin{aligned}\label{JdW}			\|\boldsymbol{J}\dot{\boldsymbol{\omega}}_{s}\| &\le K_{2}\|\boldsymbol{z}_{2}\| + \|\boldsymbol{e}_{u}\| + \left(1+\sqrt{3}\right)D_{m} \\&\quad+ \left(\lambda_{\boldsymbol{J}}\right)_{\text{max}}B_{\alpha} + \sqrt{3}k_{1}k_{u}\frac{1}{\left(\rho_{i}\right)_{\text{min}}}
		\end{aligned}
	\end{equation}
	Notably, $\|\dot{\boldsymbol{\omega}}_{s}\| = \|\boldsymbol{J^{-1}\boldsymbol{J}}\dot{\boldsymbol{\omega}}_{s}\| \le \|\boldsymbol{J}^{-1}\|\|\boldsymbol{J}\dot{\boldsymbol{\omega}}_{s}\|$ can be noticed.
	Define $\left(1+\sqrt{3}\right)D_{m} + \left(\lambda_{\boldsymbol{J}}\right)_{\text{max}}B_{\alpha} + \sqrt{3}k_{1}k_{u}/\left(\rho_{i}\right)_{\text{min}}$ as $B_{c}$, substituting the result in equation (\ref{JdW}) into $\dot{\boldsymbol{u}}$, the upper boundary of $\dot{\boldsymbol{u}}$ can be expressed as follows:
	\begin{equation}
		\begin{aligned}
			\|\dot{\boldsymbol{u}}\| &\le \left(\lambda_{\boldsymbol{J}}\right)_{\text{max}}B_{\omega}\|\boldsymbol{J}^{-1}\|\left(K_{2}\|\boldsymbol{z}_{2}\|+\|\boldsymbol{e}_{u}\|+B_{c}\right) \\
			&\quad + B_{\omega}\left(K_{2}\|\boldsymbol{z}_{2}\|+\|\boldsymbol{e}_{u}\|+B_{c}\right)\\
			&\quad + \left(K_{2}+\frac{D_{m}}{p}\right)\|\boldsymbol{J}^{-1}\|\left(K_{2}\|\boldsymbol{z}_{2}\|+\|\boldsymbol{e}_{u}\|+B_{c}\right) \\
			&\quad + \left(K_{2}+\frac{D_{m}}{p}\right)B_{\alpha}  + \left(\lambda_{\boldsymbol{J}}\right)_{\text{max}}B_{2\alpha} +\|\dot{\boldsymbol{P}}_{q}\|
		\end{aligned}
	\end{equation}
	Further, since $\|\boldsymbol{P}_{q}\|$ is practically bounded by $\sqrt{3}k_{1}k_{u}/\left(\rho_{i}\right)_{\text{min}}$and is a continuously varying variable as we mentioned in equation (\ref{boundq}) , hence it is rational to assume that $\dot{\boldsymbol{P}}_{q}$ will also be bounded by a positive constant, and we define such an upper boundary as $B_{q}$ here.
	Sorting our these result, this can be further rearranged into the following form:
	\begin{equation}\label{eu}
		\|\dot{\boldsymbol{u}}\| = \|\dot{\boldsymbol{e}}_{u}\| \le R_{1}K_{2}\|\boldsymbol{z}_{2}\| + R_{1}\|\boldsymbol{e}_{u}\| + R_{2}
	\end{equation}
	where $R_{1}$, $R_{2}$ are defined for brevity, expressed as:
	\begin{equation}
		\begin{aligned}
			R_{1} &= \left(\lambda_{\boldsymbol{J}}\right)_{\text{max}}B_{\omega}\|\boldsymbol{J}^{-1}\|+B_{\omega}\\
			&\quad+\left(K_{2}+\frac{D_{m}}{p}\right)\|\boldsymbol{J}^{-1}\|\\
			R_{2} &= \left(K_{2}+\frac{D_{m}}{p}\right)B_{\alpha} + \left(\lambda_{\boldsymbol{J}}\right)_{\text{max}}B_{2\alpha} \\
			&\quad+ B_{c}\left(\lambda_{\boldsymbol{J}}\right)_{\text{max}}B_{\omega}\|\boldsymbol{J}^{-1}\|\\
			&\quad +B_{c}B_{\omega}+B_{c}\left(K_{2}+\frac{D_{m}}{p}\right)\|\boldsymbol{J}^{-1}\| + B_{q}
		\end{aligned}
	\end{equation}
	Substituting equation (\ref{eu}) into $d\left(\|\boldsymbol{e}_{u}\|^{2}\right)/dt = 2\boldsymbol{e}^{\text{T}}_{u}\dot{\boldsymbol{e}}_{u}$ and applying the Young's inequality, we have:
	\begin{equation}
		\begin{aligned}
			\boldsymbol{e}^{\text{T}}_{u}\dot{\boldsymbol{e}}_{u} &\le 2\|\boldsymbol{e}_{u}\|\|\dot{\boldsymbol{e}}_{u}\|\\
			& \le 2\|\boldsymbol{e}_{u}\|\left[R_{1}K_{2}\|\boldsymbol{z}_{2}\|+R_{1}\|\boldsymbol{e}_{u}\|+R_{2}\right]\\
			&=2R_{1}K_{2}\|\boldsymbol{e}_{u}\|\|\boldsymbol{z}_{2}\|+2R_{1}\|\boldsymbol{e}_{u}|^{2}+2R_{2}\|\boldsymbol{e}_{u}\|\\
			&\le \left[R_{1}K_{2}+2R_{1}+1\right]\|\boldsymbol{e}_{u}\|^{2} \\
			&\quad + R_{1}K_{2}\|\boldsymbol{z}_{2}\|^{2}+R^{2}_{2}
		\end{aligned}
	\end{equation}
	Similarly, note that $\|\boldsymbol{z}_{2}\|^{2} \le \frac{2}{\left(\lambda_{\boldsymbol{J}}\right)_{\text{max}}}V_{2}$ is satisfied all along.
	Substituting the conclusion in equation (\ref{V2CONC}) and the trigger condition (\ref{trigoff}) into it, we can further obtain:
	\begin{equation}\begin{aligned}
			\boldsymbol{e}^{\text{T}}_{u}\dot{\boldsymbol{e}}_{u} &\le
			Q_{1}\left(se^{-\beta t}+m\right)\\
			&\quad+Q_{2}\left[G_{1}e^{-\beta \left(t-t^{\text{on}}_{k}\right)} + G_{2}\right] + Q_{3}\\
			&\le	Q_{1}\left(se^{-\beta \left(t-t^{\text{on}}_{k}\right)}+m\right)\\
			&\quad+Q_{2}\left[G_{1}e^{-\beta \left(t-t^{\text{on}}_{k}\right)} + G_{2}\right] + Q_{3}
		\end{aligned}
	\end{equation}
	where $Q_{1} = R_{1}K_{2}+2R_{1}+1$, $Q_{2} = \frac{2}{\left(\lambda_{\boldsymbol{J}}\right)_{\text{max}}}R_{1}K_{2}$, $Q_{3} = R^{2}_{2}$, $G_{1} = V_{2}\left(t^{\text{on}}_{k}\right) - V_{2\infty}$ and $G_{2} = V_{2\infty}$. For the final conclusion derivation, further defining $M_{1} = Q_{1}s+Q_{2}G_{1}$ and $M_{2} = Q_{1}m+Q_{2}G_{2}+Q_{3}$. Correspondingly, it derives the final differential equation of $\|\boldsymbol{e}_{u}\|^{2}$, expressed as follows:
	\begin{equation}\label{finaldiff}
		\begin{aligned}
			\boldsymbol{e}^{\text{T}}_{u}\dot{\boldsymbol{e}}_{u} &\le
			M_{1}e^{-\beta\left(t-t^{\text{on}}_{k}\right)}+M_{2}
		\end{aligned}
	\end{equation}
	Integrating equation (\ref{finaldiff}) from both side, we have:
	\begin{equation}\label{ee}
		\|\boldsymbol{e}_{u}\|^{2} \le \frac{M_{1}}{\beta}\left[1-e^{-\beta\left(t-t^{\text{on}}_{k}\right)}\right]+M_{2}\left(t-t^{\text{on}}_{k}\right)
	\end{equation}
	This result indicates that the $\|\boldsymbol{e}_{u}\|^{2}$ trajectory will not beyond the function expressed in equation (\ref{ee}).
	Such a trigger condition in equation (\ref{ee}) is hard to evaluate. We further make a relaxation to it by considering the equality condition instead. Considering a time instant $t^{*}_{2}$, where $t^{*}_{2}$ satisfies:
	\begin{equation}\label{eee}\begin{aligned}
			&\frac{M_{1}e^{\beta t^{\text{on}}_{k}}}{\beta}\left(e^{-\beta t^{\text{on}}_{k}}-e^{\beta t^{*}_{2}}\right)+M_{2}\left(t^{*}_{2}-t^{\text{on}}_{k}\right) \\=& se^{-\beta t^{*}_{2}} + m
		\end{aligned}
	\end{equation}
	such a condition will be satisfied earlier than the original one, i.e., $t^{*}_{2} \le t^{\text{off}}_{k}$. Rearranging the condition in equation (\ref{eee}), it is equivalent to:
	\begin{equation}\label{eqeq}\begin{aligned}
			&\left[\left(\frac{M_{1}e^{\beta t^{\text{on}}_{k}}}{\beta}+s\right)e^{-\beta t^{\text{on}}_{k}}-M_{2}t^{\text{on}}_{k}\right]\\
			&\quad - \left[\left(\frac{M_{1}e^{\beta t^{\text{on}}_{k}}}{\beta}+s\right)e^{-\beta t^{*}_{2}}-M_{2}t^{*}_{2}\right] = se^{-\beta t^{\text{on}}_{k}} + m
		\end{aligned}
	\end{equation}
	For further analysis, constructing a function $\mathfrak{H}\left(t\right)$ as :
	\begin{equation}
		\mathfrak{H}\left(t\right) = \left(\frac{M_{1}e^{\beta t^{\text{on}}_{k}}}{\beta}+s\right)e^{-\beta t}-M_{2}t
	\end{equation}
	Taking the time-derivative of $\mathfrak{H}\left(t\right)$, we have:
	\begin{equation}
		\dot{\mathfrak{H}}\left(t\right) = -\left(M_{1}e^{\beta t^{\text{on}}} + s\beta\right)e^{-\beta t} - M_{2}
	\end{equation}
	thus, the time-derivative $\dot{\mathfrak{H}}\left(t\right) < 0$ is hold for $t\in\left(0,+\infty\right)$. It can be observed that the decreasing rate of $\mathfrak{H}\left(t\right)$ will becoming slower. The equality in (\ref{eqeq}) can be written as:
	\begin{equation}\label{condition}
		\mathfrak{H}\left(t^{\text{on}}_{k}\right) - \mathfrak{H}\left(t^{*}_{2}\right) = se^{-\beta t^{\text{on}}_{k}} + m
	\end{equation}
	For the left hand side of (\ref{condition}), a relaxation can be made through first-order linearization. Considering the following trigger condition:
	\begin{equation}\begin{aligned}
			T_{2} &= \inf_{T_{2}\ge t^{\text{on}}_{k}}\left\{ |\dot{\mathfrak{H}}\left(t^{\text{on}}_{k}\right)|\left(T_{2}-t^{\text{on}}_{k}\right) = se^{-\beta t^{\text{on}}_{k}} + m\right\}\\
		\end{aligned}
	\end{equation}
	For each $t^{\text{on}}_{k}$, $se^{-\beta t^{\text{on}}_{k}}+m$ can be regarded as a constant, hence
	we have $T_{2} \le t^{*}_{2}$.
	Define $\Delta\tau = t^{\text{off}}_{k} - t^{\text{on}}_{k}$, it derives:
	\begin{equation}\label{tau}\begin{aligned}
			\Delta\tau &\ge t^{*}_{2}-t^{\text{on}}_{k} \ge T_{2}-t^{\text{on}}_{k}  \\
			& = \frac{se^{-\beta t^{\text{on}}_{k}}+m}{\left(M_{1}e^{\beta t^{\text{on}}_{k}}+s\beta\right)e^{-\beta t^{\text{on}}_{k}}+M_{2}}\\
			& = \frac{se^{-\beta t^{\text{on}}_{k}}+m}{s\beta e^{-\beta t^{\text{on}}_{k}}+M_{1}+M_{2}}
		\end{aligned}
	\end{equation}
	In conclusion, it can be observed that a specific lower boundary of the trigger inter-event time is obtained through the relaxation of the exponential function, which is strongly related to the design parameter. This completes the proof of theorem [\ref{T3}]. 
\end{proof}

\subsection{MIET of Turn-on Trigger Mechanism}\label{mieton}

\begin{theorem}\label{T4}
	The Minimum Inter-event time (MIET) between $t^{\text{off}}_{k}$ and $^{\text{on}}_{k+1}$ will be bounded, and the MIET will not only exist but strongly related to the system parameters.
\end{theorem}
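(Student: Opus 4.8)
The plan is to mirror the strategy of Theorem \ref{T3}: rather than solving the transcendental turn-on condition exactly, I would bound the rate at which the triggering quantity can evolve during the off phase and then integrate. The relevant quantity here is the gap $g(t) = S_{2}(t) - V_{2}(t)$, and by the turn-on rule (\ref{trigon}) the next activation occurs precisely when $g$ first decreases to $\delta_{m}$. Since the actuator is off throughout $[t^{\text{off}}_{k}, t^{\text{on}}_{k+1}]$, the dynamics analyzed in Theorem \ref{T2} apply, so I may use $\dot{V}_{2} \le U_{z}\|\boldsymbol{z}_{2}\|$ together with $\|\boldsymbol{z}_{2}\|^{2} \le 2S_{2}/\left(\lambda_{\boldsymbol{J}}\right)_{\text{min}}$, and the fact that $S_{2}(t)$ is a decreasing exponential with bounded slope.

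First I would differentiate the gap, writing $-\dot{g} = \dot{V}_{2} - \dot{S}_{2} \le U_{z}\|\boldsymbol{z}_{2}\| + |\dot{S}_{2}|$. Because $S_{2}$ decreases during the off phase, one has $\|\boldsymbol{z}_{2}\| \le \sqrt{2 S_{2}(t^{\text{off}}_{k})/\left(\lambda_{\boldsymbol{J}}\right)_{\text{min}}} \le \sqrt{2 S_{2}(0)/\left(\lambda_{\boldsymbol{J}}\right)_{\text{min}}}$, while $|\dot{S}_{2}|$ is largest at the start and is bounded by the initial slope of the exponential envelope. Collecting these, I obtain a constant $L$, depending only on $U_{z}$, $\left(\lambda_{\boldsymbol{J}}\right)_{\text{min}}$, $S_{2}(0)$, $S_{2\infty}$ and the decay rate of $S_{2}$, such that $-\dot{g}(t) \le L$ for every $t$ in the off interval.

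Integrating this bound from $t^{\text{off}}_{k}$ to $t^{\text{on}}_{k+1}$ and using $g(t^{\text{off}}_{k}) > \delta_{m}$ together with $g(t^{\text{on}}_{k+1}) = \delta_{m}$ yields $g(t^{\text{off}}_{k}) - \delta_{m} \le L\,(t^{\text{on}}_{k+1} - t^{\text{off}}_{k})$, hence the desired lower bound $t^{\text{on}}_{k+1} - t^{\text{off}}_{k} \ge (g(t^{\text{off}}_{k}) - \delta_{m})/L > 0$, manifestly expressed through the design parameters. As an alternative more in the spirit of the comparison-function arguments used elsewhere, I could instead substitute the quadratic envelope $V_{2} \le P_{l}(t)$ of equation (\ref{V2bound}) together with $S_{2}(t) \ge S_{2\infty}$ directly into the condition $V_{2} = S_{2} - \delta_{m}$; since $V_{2}$ lies below $P_{l}$, the first crossing is no earlier than the solution $\tau^{*}$ of $P_{l}(\tau^{*}) = S_{2\infty} - \delta_{m}$, which can be solved explicitly for $\tau^{*} - t^{\text{off}}_{k}$ using the square-root form of $P_{l}$.

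The \textbf{main obstacle} is establishing a uniform positive margin $g(t^{\text{off}}_{k}) - \delta_{m} \ge c > 0$ over all $k$, since the bound degenerates if the gap at turn-off is only marginally above $\delta_{m}$. I would argue this margin exists from the interplay of the two mechanisms: at the preceding activation $t^{\text{on}}_{k}$ the gap equals exactly $\delta_{m}$, and during the ensuing on phase Theorem \ref{T1} drives $V_{2}$ rapidly below $S_{2} - \delta_{m}$ while $S_{2}$ decays comparatively slowly as dictated by the parameter selection of subsection \ref{para}, so the gap reopens; combined with the strictly positive minimum on-time furnished by Theorem \ref{T3}, this guarantees the gap grows back above $\delta_{m}$ by a fixed amount before the actuator is switched off again. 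Making this fully quantitative --- in particular tracking $V_{2}$ and $S_{2}$ jointly across one complete on--off cycle --- is the delicate part, whereas the rate-bounding and integration steps above are routine.
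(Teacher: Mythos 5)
Your rate-bounding argument for the off phase is sound, and is in fact more elementary than what the paper does: the paper keeps the quadratic envelope $P_{l}$ of Theorem \ref{T2}, linearizes $S_{2}$ about $t^{\text{off}}_{k}$, and reads the MIET off the positive root of the quadratic equation (\ref{eqn}), whereas your Lipschitz constant $L$ and a single integration achieve the same end. The genuine gap is exactly the one you flag and then defer: the uniform margin $g(t^{\text{off}}_{k})-\delta_{m}\ge c>0$. That step is not a peripheral "delicate part" --- it is the first half of the paper's proof and where the theorem is actually won, and your sketch of it ("track $V_{2}$ and $S_{2}$ jointly across one complete on--off cycle") overestimates the difficulty while leaving it unproved. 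The paper closes it in a few lines: seed the Theorem \ref{T1} exponential envelope at the turn-on instant with initial value $S_{2}\left(t^{\text{on}}_{k}\right)$ rather than $V_{2}\left(t^{\text{on}}_{k}\right)$, i.e. $V_{2}\left(t\right)\le\left(S_{2}\left(t^{\text{on}}_{k}\right)-V_{2\infty}\right)e^{-\beta\left(t-t^{\text{on}}_{k}\right)}+V_{2\infty}$; because $S_{2}$ itself decays at the same rate $\beta$, the terms carrying $e^{-\beta t^{\text{on}}_{k}}$ cancel identically and one obtains $S_{2}\left(t\right)-V_{2}\left(t\right)\ge\left(S_{2\infty}-V_{2\infty}\right)\left(1-e^{-\beta\left(t-t^{\text{on}}_{k}\right)}\right)$ throughout the on phase. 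Invoking the minimum on-time $\Delta\tau$ of Theorem \ref{T3} then gives $g\left(t^{\text{off}}_{k}\right)\ge N_{k}=\left(S_{2\infty}-V_{2\infty}\right)\left(1-e^{-\beta\Delta\tau}\right)$, a constant independent of $k$, and the standing parameter condition $\delta_{m}<N_{k}$ supplies precisely your constant $c=N_{k}-\delta_{m}$. The matched decay rates are the whole trick; without this computation your final bound $\left(g\left(t^{\text{off}}_{k}\right)-\delta_{m}\right)/L$ could degenerate to zero and the theorem is not established.

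A secondary flaw: your alternative route --- intersecting the quadratic envelope $P_{l}$ with the constant level $S_{2\infty}-\delta_{m}$ --- does not yield a uniform bound. It implicitly requires $V_{2}\left(t^{\text{off}}_{k}\right)<S_{2\infty}-\delta_{m}$, but early in the maneuver $V_{2}\left(t^{\text{off}}_{k}\right)$ can exceed $S_{2\infty}-\delta_{m}$ (the envelope $S_{2}$ is still far above its asymptote), in which case the crossing time $\tau^{*}$ does not exist and the relaxation is vacuous. This is exactly why the paper retains the time dependence of $S_{2}\left(t\right)$ and replaces it by its tangent at $t^{\text{off}}_{k}$ rather than by its asymptote $S_{2\infty}$.
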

\begin{proof}

	To further discuss the Minimum Inter-Event Time of the turn-on trigger mechanism (\ref{trigon}), we start from the whole process during the time interval $t^{\text{on}}_{k}$ to $t^{\text{on}}_{k+1}$.
	Define the performance evaluation function of $V_{2}\left(t\right)$ as $S_{2}\left(t\right)$, owing to the characteristic of the trigger condition (\ref{trigon}), $V_{2}\left(t\right)$ trajectory will strictly remain beneath the performance evaluation function $S_{2}\left(t\right)$. 
	
	Intuitively, the MIET of the trigger condition (\ref{trigon}) will be strongly related to $S_{2}\left(t^{\text{off}}_{k}\right) - V_{2}\left(t^{\text{off}}_{k}\right)$, as it determines the "distance" for the trigger condition function need to "go". Accordingly, to further derive the boundary of the MIET for turn-on trigger condition, i.e., $\left(t^{\text{on}}_{k+1} - t^{\text{off}}_{k}\right)_{\text{min}}$, we consider the minima of $S_{2}\left(t^{\text{off}}_{k}\right) - V_{2}\left(t^{\text{off}}_{k}\right)$ firstly.
	
	Defining a natural performance envelope for $V_{2}\left(t\right)$ trajectory, expressed as follows:
	\begin{equation}
		S_{2}\left(t\right) = \left(S_{2}\left(0\right) - S_{2\infty}\right)e^{-\beta t} + S_{2\infty}
	\end{equation}
	where $S_{2}\left(0\right)$ represents the initial value of $S_{2}\left(t\right)$, $S_{2\infty}$ stands for the terminal asymptote value of $S_{2}\left(t\right)$. It should be noticed that $S_{2}\left(0\right) > V_{2}\left(0\right)$ and $S_{2\infty} > V_{2\infty}$ should be satisfied. 
	Correspondingly, for arbitrary trigger time instant $t^{\text{on}}_{k}$, $S_{2}\left(t^{\text{on}}_{k}\right) \ge V_{2}\left(t^{\text{on}}_{k}\right)$ will be hold. Accordingly, combining with the result in equation (\ref{V2CONC}), one can be obtained that for $t\in\left[t^{\text{on}}_{k},t^{\text{off}}_{k}\right]$, we have:
	\begin{equation}
		\begin{aligned}
			V_{2}\left(t\right)\le\left(S_{2}\left(t^{\text{on}}_{k}\right)-V_{2\infty}\right)e^{-\beta\left(t-t^{\text{on}}_{k}\right)}+V_{2\infty}
		\end{aligned}
	\end{equation} 
	where $S_{2}\left(t^{\text{on}}_{k}\right) = \left(S_{2}\left(0\right)-S_{2\infty}\right)e^{-\beta t^{\text{on}}_{k}}+S_{2\infty}$ can be derived. Thus, we can yield that:
	\begin{equation}
		\begin{aligned}
			&S_{2}\left(t\right)-V_{2}\left(t\right)\\
			\ge& S_{2}\left(t\right)-\left[\left(S_{2}\left(t^{\text{on}}_{k}\right)-V_{2\infty}\right)e^{-\beta\left(t-t^{\text{on}}_{k}\right)}+V_{2\infty}\right]\\
			= &\left[S_{2}\left(0\right)-S_{2\infty}\right]e^{-\beta t}+S_{2\infty}\\
			\quad&-\left[\left(S_{2}\left(t^{\text{on}}_{k}\right)-V_{2\infty}\right)e^{-\beta\left(t-t^{\text{on}}_{k}\right)}+V_{2\infty}\right]\\
			=& \left[S_{2}\left(0\right)-S_{2\infty}\right]e^{-\beta t^{\text{on}}_{k}}e^{-\beta\left(t-t^{\text{on}}_{k}\right)}+S_{2\infty}\\
			\quad&-\left[\left(S_{2}\left(t^{\text{on}}_{k}\right)-V_{2\infty}\right)e^{-\beta\left(t-t^{\text{on}}_{k}\right)}+V_{2\infty}\right]\\
		\end{aligned}
	\end{equation} 
	Further, it should be noticed that $\left[S_{2}\left(0\right)-S_{2\infty}\right]e^{-\beta t}+S_{2\infty}$ can be rewritten as $\left[S_{2}\left(0\right)-S_{2\infty}\right]e^{-\beta t^{\text{on}}_{k}}e^{-\beta\left(t-t^{\text{on}}_{k}\right)}+S_{2\infty}$. Therefore, for $t\in\left[t^{\text{on}}_{k},t^{\text{off}}_{k}\right]$, it concludes the result.
	\begin{equation}
		\begin{aligned}
			&S_{2}\left(t\right)-V_{2}\left(t\right)\\
			\ge&\left[-S_{2\infty}+V_{2\infty}\right]e^{-\beta\left(t-t^{\text{on}}_{k}\right)}+S_{2\infty}-V_{2\infty}\\
			=& \left(S_{2\infty}-V_{2\infty}\right)\left(1-e^{-\beta\left(t-t^{\text{on}}_{k}\right)}\right)
		\end{aligned}
	\end{equation}
	Note that $S_{2\infty}-V_{2\infty}>0$ is a constant. Meanwhile, $t^{\text{off}}_{k}-t^{\text{on}}_{k}\ge\Delta\tau$ has been proved in the previous section, as stated in equation (\ref{tau}). This indicates that $S_{2}\left(t^{\text{off}}_{k}\right)-V_{2}\left(t^{\text{off}}_{k}\right) \ge N_{k}$ will be satisfied, where $N_{k}$ is a defined defined as:
	\begin{equation}
		N_{k} = \left(S_{2\infty}-V_{2\infty}\right)\left(1-e^{-\beta\Delta\tau}\right)
	\end{equation}
	$N_{k}$ can be regarded as a minimum decreasing of the trajectory between $t^{\text{on}}_{k}$ and $t^{\text{off}}_{k}$.
	Subsequently, according to the conclusion in equation (\ref{Vbound2}), we have:
	\begin{equation}\label{Vtbound}
		V_{2}\left(t\right) \le V_{2}\left(t^{\text{off}}_{k}\right) + \frac{a^{2}_{0}}{4}\left(t-t^{\text{off}}_{k}\right)^{2} + \frac{a^{2}_{0}}{2}\left(t-t^{\text{off}}_{k}\right)C_{m}
	\end{equation} 
	In conclusion, the system behavior between $t^{\text{off}}_{k}$ and $t^{\text{on}}_{k+1}$ can be concluded as: $V_{2}\left(t\right)$ trajectory will increase from $V_{2}\left(t^{\text{off}}_{k}\right)$. Meanwhile, the performance envelope will decrease from $S_{2}\left(t^{\text{off}}_{k}\right)$. Then $S_{2}\left(t\right)-\delta_{m}$ trajectory will intersect with $V_{2}\left(t\right)$ at some time instant, satisfying the trigger condition (\ref{trigon}).

	Further, since $S_{2}\left(t^{\text{off}}_{k}\right) - V_{2}\left(t^{\text{off}}_{k}\right) \ge N_{k}$,
	considering a time instant $t_{3}>t^{\text{off}}_{k}$, such that the following relationship is hold.
	\begin{equation}\label{t3}
		t_{3} = 	\inf_{t>t^{\text{off}}_{k}}\left\{
		S_{2}\left(t_{3}\right)-\delta_{m} = \frac{a^{2}_{0}}{4}\left(t_{3}-t^{\text{off}}_{k}+W_{1}\right)^{2}	
		\right\}
	\end{equation}
	where $W_{1}$ is defined as $W_{1} = \frac{2}{a_{0}}\sqrt{S_{2}\left(t^{\text{off}}_{k}\right)-N_{k}}$. It can be observed that if the right hand side term is larger, the trigger condition will be satisfied earlier, thus, it can be infered that $t_{3}\le t^{\text{on}}_{k+1}$ will be hold.
	For the right hand side of the equation, it can be further rewritten as:
	\begin{equation}
		\frac{a^{2}_{0}}{4}\left[\left(t_{3}-t^{\text{off}}_{k}\right)^{2}+W^{2}_{1}\right]+\frac{a^{2}_{0}}{2}\left(t_{3}-t^{\text{off}}_{k}\right)W_{1}
	\end{equation}
	Notably, $\frac{a^{2}_{0}}{4}W^{2}_{1} = S_{2}\left(t^{\text{off}}_{k}\right)-N_{k}$ is hold. Further, the trigger condition in equation (\ref{t3}) can be further rearranged as follows:
	\begin{equation}
		\begin{aligned}
			\inf_{t^{*}_{3}>t^{\text{off}}_{k}}\left\{
			S_{2}\left(t^{*}_{3}\right)-S_{2}\left(t^{\text{off}}_{k}\right)=\frac{a^{2}_{0}}{4}\left(t^{*}_{3}-t^{\text{off}}_{k}\right)^{2}+W_{2}		
			\right\}
		\end{aligned}
	\end{equation}
	where $W_{2} = \frac{a^{2}_{0}}{2}\left(t^{*}_{3}-t^{\text{off}}_{k}\right)W_{1}+\delta_{m}-N_{k}$.
	Therefore, we have $t^{*}_{3}= t_{3}$. 
	
	Moreover, a vital conclusion should be noticed that $S_{2}\left(t\right)$ is an exponential function. Thus, its decreasing rate will becoming slower as it approaching the asymptote value $S_{2\infty}$. Considering another time instant $T_{3}$ satisfies the following trigger condition, expressed as:
	\begin{equation}
		\begin{aligned}
			\inf_{T_{3}>t^{\text{off}}_{k}}\left\{
			-|\dot{S}_{2}\left(t^{\text{off}}_{k}\right)|\left(T_{3}-t^{\text{off}}_{k}\right)=\frac{a^{2}_{0}}{4}\left(T_{3}-t^{\text{off}}_{k}\right)^{2}+W_{2}
			\right\}
		\end{aligned}
	\end{equation}
	it can be inferred that $T_{3}\le t^{*}_{3}$ will be always hold. Define $\Delta\mu_{2} = T_{3}-t^{\text{off}}_{k}$, $\Delta\mu_{2}$ can be regarded as a solution to the equation expressed as follows:
	\begin{equation}\label{eqn}
		\frac{a^{2}_{0}}{4}\Delta^{2}\mu_{2} + \left[|\dot{S}_{2}\left(t^{\text{off}}_{k}\right)|+\frac{a^{2}_{0}}{2}W_{1}\right]\Delta\mu_{2} + \delta_{m}-N_{k} = 0
	\end{equation}
	Notably, $a^{2}_{0}>0$ and $|\dot{S}_{2}\left(t^{\text{off}}_{k}\right)| > 0$ is hold eventually. Let $\delta_{m}-N_{k}<0$ is  always hold, the real number positive solution to the equation (\ref{eqn}) will be always exists. According to the characteristic of the parabola curve, it can be inferred that $\Delta\mu_{2}$ will achieve its minima when $|\dot{S}_{2}\left(t^{\text{off}}_{k}\right)| + \frac{a^{2}_{0}}{2}W_{1}$ reaches its maxima.
	Define $\Delta\tau_{2} = t^{\text{on}}_{k+1} - t^{\text{off}}_{k}$, we have the final conclusion as: 
	\begin{equation}
		T_{3}\le t^{*}_{3}=t_{3}\le t^{\text{on}}_{k+1}
	\end{equation}
	Which infers that $\Delta\tau_{2} \ge \Delta\mu_{2}$.
	This indicates that the inter-event time of the turn-on trigger is lower bounded, which is related to the system parameter. As a result, this completes the proof of theorem [\ref{T4}].
\end{proof}

\bibliographystyle{IEEEtran}
\bibliography{ietcppc}

\end{document}